\documentclass[12pt]{article}

\usepackage[margin=1.25in]{geometry}
\usepackage{mathrsfs, mathtools}
\usepackage{latexsym, amssymb, amsmath, amsthm}
\usepackage[round]{natbib}

\usepackage{bm}
\usepackage[title,toc,titletoc]{appendix}
\usepackage{titlesec}
\titleformat{\section}[block]{\large\normalfont\filcenter}{\thesection.}{.5em}{}
\titleformat{\subsection}[block]{\normalfont\bfseries\filright}{\thesubsection.}{.5em}{}

\usepackage{graphicx}
\usepackage{caption,subcaption}
\usepackage{comment}

\usepackage[T1]{fontenc}
\usepackage[utf8]{inputenc}
\usepackage{mathtools, slashed}

\usepackage[T1]{fontenc}
\usepackage[utf8]{inputenc}

\usepackage{setspace}
\usepackage{enumitem}
\usepackage[multiple, flushmargin]{footmisc}
\usepackage[colorlinks=true,linkcolor = blue, citecolor=blue, hyperfootnotes=false, hyperindex,breaklinks]{hyperref}
\usepackage[dvipsnames]{xcolor}
\usepackage{soul}

\usepackage{tikz}
\usetikzlibrary{positioning}

\usepackage{pgfplots}
\pgfplotsset{compat=1.14}
\usetikzlibrary{patterns, decorations.pathreplacing, arrows}

\tikzset{
  every node/.style    = {
    text centered,
    line width = .5,
    anchor = center,
  },
  every label/.style   = {
    fill = white, anchor = mid,
  },
  every path/.style   = {
    > = stealth
  },
  point/.style args   = {(#1)#2}{
    rounded corners,
    fill = white,
    minimum height = 20,
    minimum width = 10,
    label = { [name = #1] above:#2 },
  },
  point a/.style args   = {(#1)#2}{
    rounded corners,
    fill = white,
    minimum height = 10,
    minimum width = 20,
    label = { [name = #1] right:#2 },
  },
  point b/.style args   = {(#1)#2}{
    rounded corners,
    fill = white,
    minimum height = 10,
    minimum width = 75,
    label = { [name = #1] right:#2 },
  },
}

\usepackage{etoolbox}

\newtheorem{lemma}{Lemma}
\newtheorem{proposition}{Proposition}

\theoremstyle{definition}
\newtheorem{definition}{Definition}

\newtheorem{remark}{Remark}
\newtheorem{example}{Example}

\newcommand{\norm}[1]{\left\langle#1\right\rangle}
\newcommand{\abs}[1]{\left| #1 \right|}

\DeclareMathOperator*{\argmin}{arg\,min}
\DeclareMathOperator*{\argmax}{arg\,max}

\usepackage{cleveref}
\usepackage{quoting}
\usepackage{bibentry}

\usepackage{lmodern}

\begin{document}
\title{Diversity as Majorization\thanks{This paper subsumes parts of ``Diversity in Choice as Majorization,'' an extended abstract of which appears in the \emph{Proceedings of the 26th ACM Conference on Economics and Computation}. For helpful comments and discussions, we are grateful to Battal Dogan, Lars Ehlers, Matthew Elliott, Fuhito Kojima, Bobby Pakzad-Hurson, Alexander Westkamp, and seminar audiences at University of Bonn, Brown University, Durham Market-Design Workshop, EC'25, University of Montreal,  Recent Advances in School Choice 2025, RUD, SEA, SITE-Market Design, University of Tokyo, and ZEW Berlin. Mekonnen also thanks the Cowles Foundation for Research in Economics, where much of this research was conducted.}}
\author{Federico Echenique\thanks{Department of Economics, UC Berkeley. Contact: \href{mailto:fede@econ.berkeley.edu}{fede@econ.berkeley.edu}} \hspace*{1.25em} Teddy Mekonnen\thanks{Department of Economics, Brown University. Contact: \href{mailto:mekonnen@brown.edu}{mekonnen@brown.edu}}\hspace*{1.25em} M. Bumin Yenmez\thanks{Department of Economics, WUSTL and Business School, Durham University (UK). Contact: \href{mailto:bumin@wustl.edu }{bumin@wustl.edu}.}}

\maketitle
\setcounter{page}{0} \thispagestyle{empty}
\begin{abstract}
How should institutions compare group diversity, and which group should they select when they value diversity and merit? We take a target-based approach that evaluates the entire group composition without treating any type as intrinsically diversity-enhancing. Because different diversity indices may rank groups differently, we instead adapt majorization to construct an ordinal diversity preorder. We show that its maximally diverse selections are exactly those maximizing every index in a broad class. This characterization yields a reserve-and-quota policy that selects a maximally diverse group and, among such groups, the highest-merit agents. Any alternative is less diverse, less meritorious, or both.

\end{abstract}
\bigskip
\noindent \textit{Keywords: Majorization; diversity; reserves and quotas. }

\noindent \textit{JEL Classifications: D63, D47, C61}

\newpage
\begingroup
\allowdisplaybreaks
\section{Introduction}\label{sec:introduction}
Institutions often seek to form diverse groups. Firms seek to hire employees across fields of expertise, graduate programs wish to admit students across research areas, and public agencies seek to form committees whose members have varying socioeconomic or political backgrounds. But what exactly makes one group more diverse than another?

One notion, familiar from discussions of affirmative-action policies, equates greater representation of designated types with greater diversity. However, this notion does not specify when a type is under- or over-represented. We instead take a target-based approach that evaluates a group's composition relative to a benchmark---such as the population served by the institution---without treating any one type as intrinsically diversity-enhancing. 

Specifically, we propose a diversity preorder based on \textit{majorization} \citep{hardy1952inequalities}, but adapted to a primitive distributional target. We show that the maximally diverse groups under this preorder are exactly those that maximize every diversity index in a broad class, including target-adjusted versions of the Gini--Simpson index, Shannon entropy, R\'enyi entropy, and the Berger--Parker index. While different indices can differ in how they rank groups, our preorder identifies groups that are robustly optimal across different diversity indexes.

We apply our diversity preorder to study a second question: when an institution ranks agents according to merit, priority, or some other measure of quality, what selection policy yields a diverse group while favoring the highest-ranked agents? Here, we endogenously construct a reserve-and-quota policy that, together with the merit ranking, selects a group on the diversity-merit Pareto frontier.

The connection between diversity and majorization is motivated by the latter's characterization through \emph{Robin Hood transfers}. As an illustration, suppose our benchmark is the socioeconomic composition of a population, which is uniform across low $(\ell)$, middle $(m)$, and high $(h)$ income. A representative group of nine agents would therefore contain three agents of each type, yielding the target distribution $(3,3,3)$.  Consider a group $A$ consisting of one type-$\ell$, four type-$m$, and four type-$h$ agents, with distribution $(1,4,4)$. Relative to the target, type $m$ is over-represented while type $\ell$ is under-represented. A Robin Hood transfer reduces the imbalance between these two types by replacing one type-$m$ agent with one type-$\ell$ agent while leaving the number of type-$h$ agents unchanged. This produces a new group $B$ with type distribution $(2,3,4)$. Since $B$ is more representative of the broader population than $A$, we say that $B$ is more diverse than $A$. At the same time, because $B$ is obtained from $A$ through a Robin Hood transfer, its distribution is \emph{majorized} by that of $A$. More generally, one distribution is majorized by another if and only if it can be obtained from the latter through a sequence of Robin Hood transfers. Majorization provides a natural framework for comparative diversity when the benchmark is equal representation across types.

In practice, however, the relevant benchmark need not be uniform: We extend Robin Hood transfers to an arbitrary target composition $r$, yielding our notion of \emph{$r$-targeting diversity}. The target $r$ is a model primitive. It determines which types are over- or under-represented in a given group. For example, suppose $r$ is $1/9$ type-$\ell$, $4/9$ type-$m$, and $4/9$ type-$h$. Then, for a group of nine agents, the target distribution is $(1,4,4)$. Group $A$ therefore matches the target exactly, whereas $\ell$ is over-represented, and $m$ under-represented, in $B$. We say that $A$ is more $r$-diverse than $B$, reversing their ranking under the uniform benchmark.


Our first result formalizes the connection between $r$-targeting diversity and majorization. We define a simple transformation of distributions such that a group $A$ is more $r$-diverse than another group $B$ if and only if the transformed distribution of $A$ is majorized by that of $B$. This equivalence identifies the class of Schur-concave indices, which includes the indices listed earlier, that are monotone with respect to $r$-targeting diversity: whenever $A$ is more $r$-diverse than $B$, every index in this class assigns a weakly higher value to $A$. Because the $r$-targeting diversity preorder is incomplete, different indices may rank incomparable groups differently.

We then consider selecting $q$ agents when some types may be scarce and the target may be incompatible with integer selection. These issues matter especially when $q$ is small---for example, when selecting a handful of committee members for a public agency board---because replacing one agent can substantially change the group's composition. We characterize the \emph{$r$-diversity frontier}, the feasible selections that are maximally $r$-diverse.

The frontier displays significant structure, despite the incompleteness of the diversity preorder. Any two frontier groups are equally $r$-diverse, while every frontier group is strictly more $r$-diverse than every non-frontier group. Thus, the frontier forms a top equivalence class, and incomparability arises only among non-frontier groups. This conclusion does not follow from maximality alone, since an incomplete preorder may have incomparable maximal elements. Our main robustness result further shows that every strictly Schur-concave diversity index has exactly the $r$-diversity frontier as its set of feasible maximizers, making the choice of  index immaterial.



Finally, we address the second question: how should a decision maker select among the most diverse groups while favoring higher-merit agents? Using the $r$-diversity frontier, we endogenously construct type-specific reserves and quotas. The policy fills each type's reserved slots with its highest-merit agents and then fills the remaining slots by merit, subject to the quotas. We show that the resulting group is maximally $r$-diverse and that every alternative is either strictly less $r$-diverse or merit-dominated by the policy's outcome.



\subsection{Related Literature}
A large literature, beginning with \citet{abdulson03} and often motivated by controlled school choice, studies diversity in selection through type-specific reserves or quotas. \cite{hayeyi13} introduce minority reserves and show that deferred acceptance with minority reserves Pareto dominates deferred acceptance with majority quotas. Other contributions include \cite{abd2005}, \cite{kojima12}, \cite{westkamp10}, \cite{dogan16}, \cite{fratro:2017}, \cite{aytur2020}, and \cite{doganyildiz23}.

The analyses in \citet{ehayeyi14}, \citet{echyen12}, \citet{dogan17}, and \citet{dogan19} are closest to our application. They take type-specific bounds as given and study the induced choice rules. \cite{abdulkadiroglu2025market} axiomatize a many-to-one assignment mechanism that combines reserves and quotas with priorities, while \cite{imamura2020} characterizes reserve-and-quota rules as a compromise between meritocracy and diversity. We instead derive type-specific reserves and quotas from our notion of comparative diversity and show that the resulting selection lies on the diversity-merit frontier.

We also relate to more general models of distributional objectives. \cite{hakoyeyo2022} study choice rules generated by ordinally concave diversity functions. \cite{echenique2026distributionalpreferencesmarketdesign} consider a binary relation over sets of agents and characterize when the resulting distributional preferences can be implemented by greedy choice rules. and deferred acceptance. Both papers, however, take the underlying distributional criterion as primitive. We instead use majorization and an institution's target distribution to formulate the comparative diversity ordering itself.

Our model starts from  majorization \citep{hardy1952inequalities} and  Robin Hood transfers \citep{pigou1912wealth,dalton1920measurement}. The connection between diversity and majorization appears in the ecological-diversity literature. In particular, \cite{patil} define a diversity ordering based on transfers from more abundant to less abundant species. They show that this ordering is equivalent to majorization and use it to organize familiar diversity measures, including the Shannon and Simpson indices. Our paper builds on this perspective but shifts the benchmark from equal representation to an arbitrary target composition, applies the resulting framework to integer selection problems, and shows that the maximally diverse selections maximize every index in a broad class.

Subsequent work generalizes majorization in several directions. Weighted majorization assigns unequal weights to coordinates \citep{sherman1951}, while relative majorization and $d$-majorization compare distributions relative to a non-uniform benchmark \citep{ruch1980,joe1990,renes2016}. Our comparison is closer to the latter but measures departures from the benchmark in levels rather than ratios. As in the apportionment literature, our problem combines proportional targets with indivisibility \citep{balinskiyoung,pukelsheim}. Our level-based comparison reflects the structure of selection problems in which institutions allocate a fixed number of slots and adjust the composition of a group by replacing one selected agent with another. A one-slot excess or shortfall is therefore treated as the same discrete departure for every type, whereas a ratio-based comparison scales the same departure according to the type’s target share.

\cite{schoot2025} introduces $(a,b)$-majorization, which incorporates coordinate-specific scaling and shifting. Our comparison corresponds to the case with no scaling and with shifts determined by the target composition. For integer selection problems, however, \cite{schoot2025} restricts attention to integer-valued shifts, whereas the arbitrary target compositions we consider may generate non-integer shifts. This distinction is particularly relevant for small groups, where the institution must select indivisible agents, even though the target composition may be infeasible.

Finally, our selection problem also has a discrete-convex interpretation. For a fixed group size, feasible type distributions form an $M$-convex set, connecting our exchange and local-improvement results to standard results in discrete convex analysis \citep{murotashioura1999,murota2003}. However, our characterization of the frontier goes beyond what $M$-convexity alone delivers and provides the additional structure that makes implementation through type-specific reserves and quotas possible. We provide a more detailed discussion of this in \autoref{rem:dca}.

\section{Model}\label{sec:model}

\subsection{Preliminaries} \label{sec:preliminaries}
We use $\mathbb{Z}_+$ and $\mathbb{Z}_{++}$ to denote the nonnegative and positive integers, respectively. Similarly,  we denote the nonnegative and positive reals by $\mathbb R_+$ and $\mathbb R_{++}$. For any $n\in\mathbb Z_{++}$ and 
$x \in \mathbb R^n$, let  $\norm{x}\coloneqq \sum_{i=1}^n x_i$, and let $x_{[i]}$ denote the $i^{th}$ largest coordinate of $x$ so that $x_{[1]}\geq \cdots \geq x_{[n]}$. For any $i\in \{1,\ldots,n\}$, let $\chi_i\in \mathbb{Z}_+^n$  denote the unit vector where the $i$-th coordinate is one and all other coordinates are zero. The $(n-1)$-dimensional simplex is denoted by $\Delta^{n-1}\coloneqq \{x\in \mathbb R^n_+: \norm{x}=1\}$. For any two vectors $x,y\in \mathbb R^n$, we say that  $x$ \emph{majorizes} $y$, denoted as $x\succ y$, if 
\[
\sum_{i=1}^k x_{[i]} \geq \sum_{i=1}^k y_{[i]}
\]
for every $k\in \{1,\ldots,n\}$, with equality at $k=n$. We say that $x$ \emph{strictly majorizes} $y$, denoted as $x\succ_s y$, if $x \succ y$ but $y\mathrel{\slashed \succ} x$. 

\subsection{Setup}\label{sec:setup}
We consider an environment in which each individual has one of $n\in\mathbb{Z}_{++}$ possible types. A type may represent a single attribute or an intersection of several attributes. For example, a graduate-admissions committee may classify applicants according to research interests, methodological specialization, predoctoral work experience, or combinations of these characteristics. We index the $n$ possible types by $i \in \mathcal{N} \coloneqq \{1,\ldots,n\}$. 

We represent a finite population by its type distribution $x\in\mathbb{Z}+^n$, where $x_i$ is the number of type-$i$ agents and $\langle x\rangle$ is the population size. Throughout the paper, we refer to a vector $x\in\mathbb{Z}_+^n$ as a \emph{feasible} distribution, while we refer to any vector $x\in\mathbb{R}^n$ simply as a distribution. The extension to $\mathbb{R}^n$ is a mathematical device used to characterize binary relations; only feasible distributions in $\mathbb{Z}_+^n$ have a population interpretation. We return to this restricted domain in \Cref{sec:maximal}.

\section{Diversity as Majorization}\label{sec:majorization}
We propose a notion of comparative diversity for two equal-sized populations based on their types and a target distribution. As a starting point, suppose the target features equal representation by all types: let $u\in\Delta^{n-1}$ denote the uniform measure, with $u_i=1/n$ for every $i\in\mathcal{N}$. Given $\lambda\in\mathbb{Z}_{++}$ agents, the target distribution is $\lambda u$, where $\lambda u_i$ denotes the target number of type-$i$ agents. If $\lambda u\notin\mathbb{Z}_+^n$ (i.e., if $\lambda$ is not divisible by $n$), the target distribution is infeasible. Nevertheless, one can evaluate other distributions according to how well they approximate the target.

Now consider any $x \in \mathbb{R}^n$ such that $x_i - \norm{x}{u}_i > x_j - \norm{x}{u}_j$ for some types $i,j \in \mathcal{N}$.\footnote{Since $u_i=u_j$, the inequality is equivalent to $x_i>x_j$, but writing the comparison in terms of deviations from the target will be useful when we generalize the benchmark.} Relative to the target, type $i$ is more represented than type $j$. Suppose we shift a mass $\delta \geq 0$ from type $i$ to type $j$, while leaving all other types unchanged. This yields a new distribution $y = x + \delta \chi_j - \delta \chi_i$. Since $\norm{x} = \norm{y}$, the relevant target distribution for both $x$ and $y$ is $\norm{x}{u}$. If $\delta \in (0, x_i - x_j)$,  $y$ is closer than $x$ to the target distribution,\footnote{Formally, $||y-\norm{y}{u}||_2 < ||x-\norm{x}{u}||_2$.} suggesting that $y$ is more diverse than $x$ under the uniform-measure benchmark. By contrast, if $\delta = x_i - x_j$, then $y$ is merely a permutation of $x$ in which the coordinates of types $i$ and $j$ are swapped, suggesting that $x$ and $y$ are equally diverse. Such transformations of a distribution are called Robin Hood transfers. \cite{hardy1952inequalities} show that $y$ is obtained from $x$ through a sequence of Robin Hood transfers if and only if $x$ majorizes $y$. 

For many applications, however, the uniform benchmark is too restrictive. We instead introduce a generalized notion of diversity based on an arbitrary benchmark measure $r \in \Delta^{n-1}$. Given $\lambda \in\mathbb{Z}_{++}$ agents, the target distribution becomes $\lambda r$, with $\lambda r_i$ capturing the target number of type-$i$ agents. Once again, such a target distribution is infeasible if $\lambda r \notin \mathbb{Z}_+^n$, but it can nevertheless serve as the benchmark against which other distributions are compared.

In the selection problems we have in mind, the composition of a group is adjusted by replacing an agent of one type with an agent of another. Such a one-agent swap changes the deviations of the two affected types from their respective targets by one unit, regardless of their target shares. This motivates a notion of diversity that measures over- and under-representation in levels---the number of agents by which each type exceeds or falls short of its target---rather than in proportional terms that vary with the type's target share.

Given a measure $r\in\Delta^{n-1}$, distribution $x\in\mathbb{R}^n$, and type $i\in\mathcal{N}$, define the deviation of type $i$ from its target level by $d_i^r(x)\coloneqq x_i-\norm{x}r_i$. Thus, type $i$ is relatively more represented than type $j$ in $x$ whenever $d_i^r(x)\geq d_j^r(x)$.

\smallskip
\begin{definition}\label{def:rht}
Given a measure ${r}\in\Delta^{n-1}$ and two type distributions $x, y\in\mathbb{R}^n$, we say that $y$ is obtained from $x$ via an \emph{$r$-targeting Robin Hood transfer} if there exist types $i,j\in\mathcal{N}$ and a constant $\delta$ satisfying  $d_i^r(x)\geq d_j^r(x)$ and $0\leq \delta\leq d_i^r(x)-d_j^r(x)$ such that $y=x+\delta \chi_j-\delta\chi_i$.
\end{definition}
\smallskip

\autoref{def:rht} extends Robin Hood transfers from the uniform measure to an arbitrary measure $r$: the distribution $y$ is constructed by moving mass from a relatively more represented type $i$ to a relatively less represented type $j$, so that $y$ is weakly closer than $x$ to the target distribution $\norm{x}r$. 
\smallskip
\begin{definition}\label{def:diversity}
Given a measure $r \in \Delta^{n-1}$ and two type distributions $x, y \in \mathbb{R}^n$, we say that $y$ is \emph{more $r$-diverse} than $x$, denoted by $y \mathrel{\unrhd_r} x$, if there exists a finite sequence of distributions $\{z_k\}_{k=0}^K$ with $z_0 = x$, $z_K = y$, such that for each $k \in \{1, \ldots, K\}$, $z_k$ is obtained from $z_{k-1}$ via an $r$-targeting Robin Hood transfer. We say that $y$ is \emph{strictly more $r$-diverse} than $x$, denoted by $y \mathrel{\rhd_r} x$, if $y \mathrel{\unrhd_r} x$ and $x \mathrel{\slashed\unrhd_r} y$.
\end{definition}
\smallskip

We refer to the diversity preorder in \autoref{def:diversity} as \emph{$r$-targeting diversity}. 

\begin{proposition}\label{prop:majorization}
Fix a measure $r\in\Delta^{n-1}$, and let $T^r:\mathbb{R}^n\to\mathbb{R}^n$ be given by 
\[
T^r(x) \coloneqq x+\norm{x}(u-r).
\]
For any two type distributions $x, y\in\mathbb{R}^n$, $y\mathrel{\unrhd_r} x$ if and only if $T^{r}(x)\succ T^r(y)$.
\end{proposition}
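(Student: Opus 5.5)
The plan is to reduce $r$-targeting Robin Hood transfers to ordinary Robin Hood transfers on the transformed vectors, and then invoke the Hardy--Littlewood--P\'olya characterization recalled above. The first observation is that every $r$-targeting Robin Hood transfer preserves the total mass: if $y=x+\delta\chi_j-\delta\chi_i$, then $\norm{y}=\norm{x}$. Writing $\lambda\coloneqq\norm{x}$, the map $T^r$ restricted to the hyperplane $\{z:\norm{z}=\lambda\}$ is the translation $z\mapsto z+\lambda(u-r)$. Hence
\[
T^r(y)=T^r(x)+\delta\chi_j-\delta\chi_i .
\]
Moreover, $T^r(x)_i=x_i-\lambda r_i+\lambda/n=d^r_i(x)+\lambda/n$. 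So the condition $d^r_i(x)\ge d^r_j(x)$ is equivalent to $T^r(x)_i\ge T^r(x)_j$, and the bound on $\delta$ is exactly $0\le\delta\le T^r(x)_i-T^r(x)_j$. Therefore $y$ is obtained from $x$ by an $r$-targeting Robin Hood transfer if and only if $T^r(y)$ is obtained from $T^r(x)$ by an ordinary (uniform) Robin Hood transfer, with both vectors in the hyperplane of total $\lambda$. The reverse direction holds because $T^r$ is a bijection on that hyperplane; its inverse is translation by $-\lambda(u-r)$. Also $\norm{T^r(x)}=\norm{x}$, since $\norm{u-r}=0$.

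For the ``only if'' direction, take a chain $z_0=x,\dots,z_K=y$ of $r$-targeting transfers. All $z_k$ share the total $\lambda$, so $T^r(z_0),\dots,T^r(z_K)$ is a chain of ordinary Robin Hood transfers. Each such transfer produces a vector majorized by its predecessor. This is the standard fact that a transfer from a larger to a smaller coordinate, by no more than their difference, weakly decreases every partial sum of the decreasingly ordered coordinates and preserves the total. Since $\succ$ is transitive, $T^r(x)\succ T^r(y)$.

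For the ``if'' direction, suppose $T^r(x)\succ T^r(y)$. The equality at $k=n$ gives $\norm{x}=\norm{T^r(x)}=\norm{T^r(y)}=\norm{y}$, so both vectors lie in the same hyperplane, with total $\lambda$. By \cite{hardy1952inequalities}, $T^r(y)$ is obtained from $T^r(x)$ by finitely many Robin Hood transfers. A permutation counts as a transfer with $\delta$ equal to the full gap, so the result applies in this form, and all intermediate vectors keep total $\lambda$. Pulling each intermediate vector back through the inverse translation gives a chain from $x$ to $y$. By the equivalence in the first paragraph, each step of this chain is an $r$-targeting transfer, so $y\mathrel{\unrhd_r}x$.

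The main obstacle is making sure the cited Hardy--Littlewood--P\'olya result applies to arbitrary real vectors, including negative entries, and allows transfers of any size up to the gap. The standard proof uses transfers between two coordinates that strictly reduce the number of coordinates where the vectors differ. It is valid on $\mathbb R^n$ and needs no sign restrictions, so I would cite it as stated, or sketch that induction briefly if the reader needs it. Everything else is bookkeeping about the translation.
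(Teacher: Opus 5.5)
Your proposal is correct and follows the paper's proof essentially step for step. Both arguments use mass preservation and the invertibility of $T^r$ to show that $r$-targeting Robin Hood transfers on $x$ correspond exactly to standard Robin Hood transfers on $T^r(x)$. The ``only if'' direction then follows by chaining and transitivity of $\succ$. The ``if'' direction follows from the Hardy--Littlewood--P\'olya transfer theorem, which the paper cites via Marshall--Olkin, by pulling the chain back through $[T^r]^{-1}$. Your extra checks are details the paper leaves implicit: that equality at $k=n$ forces $\langle x\rangle=\langle y\rangle$, and that the transfer theorem holds on all of $\mathbb{R}^n$ with permutations counted as full-gap transfers.
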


\autoref{prop:majorization} formalizes the equivalence between $r$-targeting diversity and majorization. This equivalence implies that the binary relation $\unrhd_r$ is reflexive and transitive, i.e., a preorder, although it is neither antisymmetric nor complete. The transformation $T^r$ linearly translates distributions so that, for $\lambda$ agents, the target distribution under measure $r$ is mapped to the target distribution under the uniform measure, i.e., $T^r(\lambda r) = \lambda u$. Thus, for any distribution $x \in \mathbb{R}^n$ with $\norm{x} = \lambda$, we have $T^r(x) \succ T^r(\lambda r)$, making $\lambda r$ the most diverse distribution with $\lambda$ agents (if $\lambda r$ is feasible). Moreover, $T^r$ preserves the population size: for any distribution $x$, we have $\norm{x} = \norm{T^r(x)}$. Finally, when $r=u$, $T^r$ reduces to the identity function, and we recover the standard equivalence between majorization and Robin Hood transfers.

For the remainder of the paper, we fix a target measure $r \in \Delta^{n-1}$ and consider the restriction of the associated $r$-targeting diversity preorder to the subset of feasible distributions $\mathbb{Z}_+^n$. 

\section{Maximally Diverse Selections}\label{sec:maximal}
Consider a decision maker who selects $q\in\mathbb{Z}_{++}$ agents from a population with type distribution $x\in\mathbb{Z}_+^n$, with $q\leq \norm{x}$. Which type distributions of selected groups are maximally $r$-diverse?

We first define the \emph{budget set}
\[
\mathcal{B}(x,q)\coloneqq\left\{y\in\mathbb{Z}_+^n:y\leq x\text{ and }\norm{y}=q\right\}, 
\]
as the set of all feasible distributions that can be generated by picking $q$ agents from a population whose type distribution is $x$. Clearly, $\mathcal{B}(x,q)$ is non-empty.

We next identify the distributions in the budget set that are maximal with respect to $r$-targeting diversity. Because $\unrhd_r$ is generally incomplete, a maximal distribution is one that is undominated by any other distribution in the budget set; it need not be comparable with, let alone dominate, every other distribution in the budget set. Formally, the set of maximally $r$-diverse distributions is given by
\[
\mathcal{F}_r(x,q)\coloneqq\left\{y\in\mathcal{B}(x,q):\nexists z\in\mathcal{B}(x,q)\text{ such that }z\mathrel{\rhd_r}y\right\}.
\]
We refer to $\mathcal{F}_r(x,q)$ as the \emph{$r$-diversity frontier}. Because the budget set is finite and nonempty, $\mathcal{F}_r(x,q)$ is also nonempty. 

While our definition of the $r$-diversity frontier is order-theoretic, we also provide an equivalent characterization using a family of real-valued diversity indices. For any concave function $g:\mathbb{R}\to\mathbb{R}$, define
\[
\Phi_g^r(y)\coloneqq\sum_{i\in\mathcal{N}}g\big(T_i^r(y)\big).
\]
The mapping $y\mapsto\sum_{i\in\mathcal{N}}g(y_i)$ is a separable Schur-concave function whenever $g$ is concave. We compose this mapping with the transformation $T^r$ and interpret the resulting function $\Phi_g^r$ as a target-adjusted diversity index. Let $\mathcal{G}$ denote the class of concave functions. By the convex-function characterization of majorization \citep{marshallolkin}, $y \mathrel{\unrhd_r}z$ if and only if $\Phi_g^r(y)\geq\Phi_g^r(z)$ for every $g\in\mathcal{G}$, with strict inequality whenever $y\mathrel{\rhd_r} z$ and $g$ is strictly concave. We consider the corresponding class of diversity indices $\{\Phi_g^r:g\in\mathcal{G}\}$, where different choices of $g$ capture different ways of evaluating deviations from the target. For example, when $g(t)=-t^2$, maximizing $\Phi_g^r$ is equivalent to minimizing the squared Euclidean distance from the target.


\begin{proposition}\label{prop:quadratic-characterization}
For every feasible distribution $x\in\mathbb{Z}_+^n$ and desired group size $q\in\mathbb{Z}_{++}$ with $q\leq \norm{x}$, 
\[
\mathcal{F}_r(x,q)=\bigcap_{g\in\mathcal{G}}\argmax_{y\in  \mathcal{B}(x,q)}\Phi^r_g(y).
\]
\end{proposition}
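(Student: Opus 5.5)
The plan is to prove the two inclusions separately. The easy one is
\[
\bigcap_{g\in\mathcal{G}}\argmax_{y\in\mathcal{B}(x,q)}\Phi_g^r(y)\subseteq\mathcal{F}_r(x,q).
\]
The harder one is $\mathcal{F}_r(x,q)\subseteq\bigcap_g\argmax$. For that I will construct a single distribution $y^*\in\mathcal{B}(x,q)$ that maximizes every $\Phi_g^r$ simultaneously, and then show that every frontier element is $r$-diversity-equivalent to $y^*$.

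\emph{First inclusion.} Take $y$ in the intersection and suppose some $z\in\mathcal{B}(x,q)$ satisfies $z\rhd_r y$. Fix any strictly concave $g$, for instance $g(t)=-t^2$. By the convex-function characterization recalled before the statement (which rests on \autoref{prop:majorization}), $\Phi_g^r(z)>\Phi_g^r(y)$. This contradicts $y\in\argmax\Phi_g^r$, so $y\in\mathcal{F}_r(x,q)$.

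\emph{Construction of a universal maximizer.} Every $y\in\mathcal{B}(x,q)$ has $\norm{y}=q$. Hence $T^r(y)=y+c$ with the fixed vector $c\coloneqq q(u-r)$, and maximizing $\Phi_g^r$ over the budget set is the problem
\[
\max\sum_i g(y_i+c_i)\quad\text{subject to}\quad y\in\mathbb{Z}_+^n,\ y\le x,\ \norm{y}=q .
\]
This is a separable concave maximization over an integer box intersected with a hyperplane. I will define $y^*$ greedily: start from $y=0$ and, $q$ times, add one unit to a type $i$ with $y_i<x_i$ that minimizes the current value $y_i+c_i$, breaking ties by a fixed rule. The key observation is that the marginal gain $g(y_i+c_i+1)-g(y_i+c_i)$ is nonincreasing in $y_i+c_i$ for \emph{every} concave $g$. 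So choosing the smallest $y_i+c_i$ is a largest-marginal-gain choice simultaneously for all $g\in\mathcal{G}$. The greedy sequence is therefore one and the same for every $g$.

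It then remains to invoke, or reprove, the standard fact that greedy is optimal for separable concave maximization over such a set. The reproof goes by an exchange argument. Suppose an optimal $y$ differs from $y^*$. Then there are types $i,j$ with $y_i>y^*_i$ and $y_j<y^*_j$. By the greedy stopping rule, the last unit greedy placed on $j$ had marginal gain at least that of the next unit on $i$. Moving one unit of $y$ from $i$ to $j$ therefore weakly increases $\Phi_g^r$ and reduces $\|y-y^*\|_1$. Iterating gives $\Phi_g^r(y^*)\ge\Phi_g^r(y)$. Making this exchange argument tight, including ties and the box constraints $y_i\le x_i$, is the step I expect to need the most care. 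It is also where the non-integer shifts $c$ appear, but they enter only through the ordering of the values $y_i+c_i$, which is harmless.

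\emph{Second inclusion.} Let $y\in\mathcal{F}_r(x,q)$. Since $\Phi_g^r(y^*)\ge\Phi_g^r(y)$ for every concave $g$, the characterization gives $y^*\unrhd_r y$. Because $y$ is on the frontier, $y^*\rhd_r y$ is impossible, so $y\unrhd_r y^*$ as well. The two directions give $\Phi_g^r(y)=\Phi_g^r(y^*)=\max_{\mathcal{B}(x,q)}\Phi_g^r$ for every $g\in\mathcal{G}$. Hence $y$ lies in the intersection, which completes the proof.
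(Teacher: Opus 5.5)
Your proof is correct. The first inclusion is exactly the paper's argument, but for $\mathcal{F}_r(x,q)\subseteq\bigcap_g\argmax_{y\in\mathcal{B}(x,q)}\Phi^r_g(y)$ you take a genuinely different route.

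\textbf{The paper's route.} The paper never constructs a common maximizer. In \autoref{lem:quadratic-exchange} it fixes one $g$ and compares a non-maximizer $y$ with a $\Phi^r_g$-maximizer $z$. It picks $i^*\in\argmax_{i:y_i>z_i}T^r_i(y)$ and $j^*\in\argmin_{j:y_j<z_j}T^r_j(y)$, then telescopes the unit increments $\nabla_g$ to show $T^r_{i^*}(y)-T^r_{j^*}(y)>1$. So the single swap $y+\chi_{j^*}-\chi_{i^*}$ is a strict improvement and $y\notin\mathcal{F}_r(x,q)$. The existence of a point maximizing every $\Phi^r_g$ then falls out non-constructively, because the frontier is nonempty.

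\textbf{Your route.} You use the fact that the water-filling step on $y_i+c_i$ does not depend on $g$ to build one explicit $y^*$ that maximizes all $\Phi^r_g$ simultaneously. The rest is purely order-theoretic: a frontier point cannot be strictly dominated by $y^*$, so it is $\unrhd_r$-equivalent to $y^*$.

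\textbf{The exchange step holds.} The step you flagged as delicate does go through. When greedy placed its last unit on $j$, type $i$ was still available, since its running count was at most $y^*_i<y_i\le x_i$. Hence $y^*_j-1+c_j\le y^*_i+c_i$. The swapped point stays in $\mathcal{B}(x,q)$ because $y_i-1\ge y^*_i\ge 0$ and $y_j+1\le y^*_j\le x_j$. Ties cost nothing because every comparison is weak. Running the exchange from an arbitrary $y$, rather than an optimal one, gives $\Phi^r_g(y^*)\ge\Phi^r_g(y)$ directly.

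\textbf{What each route buys.} Yours is constructive: it gives an explicit algorithm for a frontier point. It also yields \autoref{prop:frontier-indifference}(ii) and \autoref{prop:frontier-ranking}(ii) almost for free, since every frontier point is equivalent to $y^*$ and every non-frontier point is strictly dominated by it. The paper's lemma delivers the one-swap \emph{strict} local improvement of \autoref{prop:frontier-ranking}(i). Your exchange, as written, only guarantees $T^r_i(y)-T^r_j(y)\ge 1$, which is a weak Robin Hood transfer. To get that result you would still need something like the paper's extremal choice of $i^*,j^*$.
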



\autoref{prop:quadratic-characterization} shows that the $r$-diversity frontier is exactly the set of distributions that simultaneously maximize every separable Schur-concave diversity index over the budget set. In fact, this conclusion extends beyond separable indices: for every Schur-concave $\Psi:\mathbb{R}^n\to \mathbb{R}$,
\[
\mathcal{F}_r(x,q)\subseteq\argmax_{y\in\mathcal{B}(x,q)}\Psi(T^r(y)),
\]
with equality if $\Psi$ is \emph{strictly} Schur-concave.\footnote{Both claims follow from \autoref{prop:quadratic-characterization}. If $y\in\mathcal{F}_r(x,q)$ and $z\in\mathcal{B}(x,q)$, then $\Phi^r_g(y)\geq\Phi^r_g(z)$ for every $g\in\mathcal{G}$, so $T^r(z)\succ T^r(y)$ by the convex-function characterization of majorization \citep[Proposition~4.B.1]{marshallolkin}, and hence $\Psi(T^r(y))\geq\Psi(T^r(z))$ for every Schur-concave $\Psi$. Conversely, if $y\notin\mathcal{F}_r(x,q)$, take any $z\in\mathcal{F}_r(x,q)$: then $\Phi^r_g(y)\leq\Phi^r_g(z)$ for every $g\in\mathcal{G}$, with strict inequality for some $g$, so $T^r(y)\succ_s T^r(z)$, and any strictly Schur-concave $\Psi$ satisfies $\Psi(T^r(z))>\Psi(T^r(y))$.} Thus, the frontier is not driven by any particular cardinal measure of diversity.




Based on \autoref{prop:quadratic-characterization}, we first establish that the frontier satisfies an exchange property: whenever a frontier distribution contains more than the minimum number of type-$i$ agents and fewer than the maximum number of type-$j$ agents across all frontier distributions, replacing one type-$i$ agent with a type-$j$ agent preserves maximality. Moreover, any two frontier distributions are equally $r$-diverse, which does not follow from maximality alone; in principle, they could be be incomparable.

\begin{proposition}\label{prop:frontier-indifference}
Fix a feasible distribution $x\in\mathbb{Z}_+^n$ and a desired group size $q\in\mathbb{Z}_{++}$ such that $q\leq\norm{x}$. 
\begin{enumerate}[label={$(\roman*)$}]
\item For any $y\in\mathcal{F}_r(x,q)$ and distinct $i,j\in\mathcal{N}$ such that $y_i>\min_{z\in\mathcal{F}_r(x,q)}z_i$ and $y_j<\max_{z\in\mathcal{F}_r(x,q)}z_j$, 
\[
y-\chi_i+\chi_j\in\mathcal{F}_r(x,q).
\]
Moreover, 
\[
y_i=\min_{z\in\mathcal{F}_r(x,q)}z_i+1,\quad y_j=\max_{z\in\mathcal{F}_r(x,q)}z_j-1,  \text{ and }\quad T^r_i(y)=T^r_j(y)+1.
\]
\item For any $y,z\in\mathcal{F}_r(x,q)$, we have $y\mathrel{\unrhd_r}z$ and $z\mathrel{\unrhd_r}y$.
\end{enumerate}
\end{proposition}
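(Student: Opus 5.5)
The plan is to work throughout with the transformed vector $w=T^r(y)=y+q(u-r)$. On $\mathcal{B}(x,q)$ the shift $c\coloneqq q(u-r)$ is the same constant vector for every $y$, so $w_i$ and $y_i$ differ by the fixed amount $c_i$. I will prove part (ii) first, because it is almost immediate from \autoref{prop:quadratic-characterization}, and then use it as a tool for part (i).

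\emph{Part (ii).} Take $y,z\in\mathcal{F}_r(x,q)$. By \autoref{prop:quadratic-characterization}, both lie in $\argmax_{\mathcal{B}(x,q)}\Phi^r_g$ for every $g\in\mathcal{G}$, so $\Phi^r_g(y)=\Phi^r_g(z)$ for all concave $g$. By the convex-function characterization of majorization \citep[Proposition~4.B.1]{marshallolkin}, this gives both $T^r(y)\succ T^r(z)$ and $T^r(z)\succ T^r(y)$. By \autoref{prop:majorization}, this means $y\mathrel{\unrhd_r}z$ and $z\mathrel{\unrhd_r}y$. A useful by-product is that all frontier vectors have the same sorted $T^r$-vector. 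Conversely, any budget vector whose $T^r$-image is a permutation of that of a frontier vector attains every maximum, and so is itself in $\mathcal{F}_r(x,q)$.

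\emph{Part (i).} The first ingredient is a local optimality inequality. Take $g(t)=-t^2$. For $y\in\mathcal{F}_r(x,q)$, suppose a swap $y-\chi_a+\chi_b$ is feasible, meaning $y_a\ge 1$ and $y_b<x_b$. A direct computation shows the swap changes $\Phi^r_g$ by $2(w_a-w_b)-2$, so optimality forces $w_a-w_b\le 1$.

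In our situation the swap from $i$ to $j$ is feasible. Indeed, $y_i>\min_{\mathcal{F}_r}z_i\ge 0$, and $y_j<\max_{\mathcal{F}_r}z_j\le x_j$. Hence $w_i-w_j\le1$.

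For the reverse bound, I use the exchange property of the $M$-convex set $\mathcal{B}(x,q)$, which is a box intersected with a hyperplane. Let $z\in\mathcal{F}_r$ attain $z_i=\min_{\mathcal{F}_r}z_i<y_i$. Exchange yields a type $k$ with $z_k>y_k$ such that both $y-\chi_i+\chi_k$ and $z+\chi_i-\chi_k$ are feasible. Applying the local inequality to $y$ and to $z$ gives
\[
w_i(y)-w_k(y)\le 1 \quad\text{and}\quad w_k(z)-w_i(z)\le1 .
\]
Since $z_k\ge y_k+1$ and $z_i\le y_i-1$, these inequalities force $w_i(y)=w_k(y)+1$, $z_k=y_k+1$, and $z_i=y_i-1$. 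In particular this proves $y_i=\min_{\mathcal{F}_r}z_i+1$. The symmetric argument, using a frontier $z'$ with $z'_j=\max_{\mathcal{F}_r}z'_j$, produces a type $l$ with $w_l(y)=w_j(y)+1$ and yields $y_j=\max_{\mathcal{F}_r}z_j-1$.

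The remaining claim, and the main obstacle, is $w_i(y)-w_j(y)\ge1$. Once it holds, we get $w_i=w_j+1$, so $y-\chi_i+\chi_j$ has $T^r$-image equal to $w$ with the $i$- and $j$-entries transposed. By the by-product of part (ii), this vector lies in $\mathcal{F}_r(x,q)$. My first attempt at the bound will argue by contradiction. Suppose $w_i\le w_j$. I will combine the frontier vector $y-\chi_i+\chi_k$, which by part (ii) is a transposition of values and hence in the frontier, with the swap between $l$ and $j$. Iterating the local inequality along these swaps should produce a feasible swap $a\to b$ with $w_a-w_b\ge2$, contradicting optimality.

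If that chaining becomes messy, I will try a second route. I would show directly that every frontier $T^r$-vector has all coordinates in a window of the form $\{m,m+1\}$ on the unconstrained types. On that window, "$y_i$ above its frontier minimum" pins $w_i$ to the top level and "$y_j$ below its frontier maximum" pins $w_j$ to the bottom level, which gives the bound.
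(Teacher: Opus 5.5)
Your part (ii), the local inequality $w_a-w_b\le 1$ for feasible swaps at a frontier point, and the squeeze using a frontier $z$ attaining $\min_{\mathcal{F}_r}z_i$ (and symmetrically $z'$ attaining the maximum of the $j$-th coordinate) match the paper's proof. The only differences are that you use $g(t)=-t^2$ and $M$-convex exchange, where the paper uses strict Robin Hood transfers and any $k$ with $y_k<z_k$. These steps correctly give $y_i=\min_{\mathcal{F}_r}z_i+1$ and $y_j=\max_{\mathcal{F}_r}z_j-1$. They also give a type $k$ with $y_k<z_k$ and $w_k(y)=w_i(y)-1$, and a type $l$ with $y_l>z'_l$ and $w_l(y)=w_j(y)+1$.

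The gap is the step you flag yourself, $w_i(y)-w_j(y)\ge 1$. You do not prove it, and the plan you sketch is set up wrongly. Refuting $w_i\le w_j$ only yields $w_i>w_j$. The $T^r$-coordinates are generally not integers, since $w_i-w_j=y_i-y_j+q(r_j-r_i)$. So a priori $0<w_i-w_j<1$ is possible and must also be excluded. The contradiction you need is therefore a feasible swap with gap strictly greater than $1$, not $\ge 2$.

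Your fallback does not work either. For instance, with $n=2$, $r=(0.3,0.7)$, $x=(1,1)$ and $q=1$, the unique frontier point $(0,1)$ has $T^r$-image $(0.2,0.8)$, which fits in no window $\{m,m+1\}$. Restricted to types with $L_{r,i}<U_{r,i}$, the window claim is essentially the statement you are trying to prove.

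The missing idea is one cross swap, and you already have every ingredient: move one agent from $l$ to $k$ at $y$ itself. This swap is feasible because $y_l>z'_l\ge 0$ and $y_k<z_k\le x_k$. Your local inequality then gives $w_l(y)-w_k(y)\le 1$ (trivially if $l=k$), that is,
\[
\bigl(w_j(y)+1\bigr)-\bigl(w_i(y)-1\bigr)\le 1 .
\]
This is $w_i(y)\ge w_j(y)+1$ outright. It needs no integrality assumption and no detour through the frontier vector $y-\chi_i+\chi_k$.

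Combined with $w_i-w_j\le 1$ from the direct $i\to j$ swap, you get $T^r_i(y)=T^r_j(y)+1$. Your permutation by-product from part (ii) then puts $y-\chi_i+\chi_j$ in $\mathcal{F}_r(x,q)$. This is exactly how the paper closes part (i).
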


We next show that, despite the incompleteness of the $r$-targeting diversity relation, every maximally diverse distribution is strictly more $r$-diverse than every non-maximal distribution in the budget set. Moreover, every non-maximal distribution admits a local improvement: it can be made strictly more $r$-diverse by replacing one selected agent with an agent of another type.

\begin{proposition}\label{prop:frontier-ranking}
Fix a feasible distribution $x\in\mathbb{Z}_+^n$ and desired group size $q\in\mathbb{Z}_{++}$ such that $q\leq \norm{x}$. Let $y\in \mathcal{B}(x,q)\backslash\mathcal{F}_r(x,q)$. Then,
\begin{enumerate}[label={$(\roman*)$}]
\item There exist $i, j\in \mathcal N$ such that  $y+\chi_i-\chi_j\in \mathcal{B}(x,q)$ and $y+\chi_i-\chi_j\rhd_r y$. 
\item For every $z\in \mathcal{F}_r(x,q)$, $z \mathrel{\rhd_r} y$.
\end{enumerate} 
\end{proposition}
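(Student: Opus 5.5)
Throughout, write $w=T^r(y)$. Because $\norm{y}=q$ is fixed on $\mathcal{B}(x,q)$, $T^r$ is a translation by the constant vector $q(u-r)$ there. A unit swap $y\mapsto y+\chi_i-\chi_j$ therefore changes $w$ to $w+\chi_i-\chi_j$. By \autoref{prop:majorization}, $y'\rhd_r y$ means $T^r(y)\succ_s T^r(y')$.

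\textbf{Part (i).} I will reduce the claim to a local optimality statement for a single strictly concave index, $g(t)=-t^2$. Let $\Phi(y)=-\sum_k w_k^2$. Since $y\notin\mathcal{F}_r(x,q)$, \autoref{prop:quadratic-characterization} shows that $y$ fails to maximize $\Phi_g^r$ for some concave $g$. I want to conclude that $y$ does not maximize the quadratic index either. The cleanest route is the footnote after \autoref{prop:quadratic-characterization}: for strictly Schur-concave $\Psi$, $\mathcal{F}_r=\argmax \Psi\circ T^r$. Taking $\Psi(w)=-\sum w_k^2$, the point $y$ is not a quadratic maximizer on $\mathcal{B}(x,q)$.

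Next I will use that $\mathcal{B}(x,q)$ is an $M$-convex set and that $\Phi$ is separable concave (an $M^\natural$-concave function). For such functions, local optimality with respect to unit exchanges implies global optimality (\citealp{murota2003}). Hence there exist $i,j$ with $y+\chi_i-\chi_j\in\mathcal{B}(x,q)$ that strictly increase $\Phi$, which forces $w_j-w_i>1$.

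To keep the argument self-contained, I can argue directly instead. Take a quadratic maximizer $z$. Among $k$ with $y_k<z_k$ and $k$ with $y_k>z_k$, a standard exchange argument on the segment between $y$ and $z$ shows that if every swap toward $z$ has $w_j-w_i\le 1$, then summing gives $\Phi(z)\le\Phi(y)$. I will fill this in by pairing coordinates and using convexity of $t^2$ on integers shifted by a common vector.

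Given $w_j-w_i>1$, the move $y'=y+\chi_i-\chi_j$ changes $w$ to $w'=w+\chi_i-\chi_j$. This is an ordinary Robin Hood transfer with $\delta=1<w_j-w_i$, so $w\succ w'$. Since $w'$ is not a permutation of $w$ (because $\delta\neq w_j-w_i$), we get $w'\not\succ w$ (for instance, the sum of squares strictly drops). Hence $y'\rhd_r y$.

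\textbf{Part (ii).} Fix $z\in\mathcal{F}_r(x,q)$. By the footnote argument, $\Phi_g^r(z)\ge\Phi_g^r(y)$ for all concave $g$, so $z\unrhd_r y$ via Marshall--Olkin. For strictness, I need $y\not\unrhd_r z$. Suppose instead $y\unrhd_r z$. Then by transitivity $y\unrhd_r z\unrhd_r v$ for every $v\in\mathcal{B}$, using $z\unrhd_r v$ from the same argument. Then no $v$ satisfies $v\rhd_r y$: if $v\unrhd_r y$ and $y\unrhd_r v$, strictness fails. So $y\in\mathcal{F}_r$, a contradiction. Alternatively, part (i) already gives $y'\rhd_r y$, and $z\unrhd_r y'$ then yields $z\rhd_r y$ by transitivity of the strict part.

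\textbf{Main obstacle.} The main obstacle is the local-to-global step in (i): showing that a non-frontier $y$ admits a single improving unit swap within $\mathcal{B}(x,q)$, given that the shifts $q(u-r)$ are non-integer. I would handle this by citing $M$-convexity of $\mathcal{B}(x,q)$ together with separable concavity of the quadratic index, since the non-integer shift only translates $g$ and preserves discrete concavity on the integer lattice.
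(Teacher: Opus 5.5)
Your proof is correct and follows the paper's route in part (i). The paper applies its appendix exchange lemma to whichever concave $g$ \autoref{prop:quadratic-characterization} says $y$ fails to maximize. That lemma shows a non-maximizer admits a feasible improving unit swap toward a maximizer, and your $M$-concavity citation, or your direct sketch, is that lemma specialized to $g(t)=-t^2$, so the detour through the strictly Schur-concave footnote is harmless but unnecessary. In part (ii) the only real difference is how you get strictness: you rule out $y\mathrel{\unrhd_r} z$ order-theoretically (otherwise $y$ would weakly dominate all of $\mathcal{B}(x,q)$ and hence be maximal, or alternatively via part (i) and transitivity), whereas the paper notes that some $\Phi^r_g$ is strictly larger at $z$, so $T^r(y)$ and $T^r(z)$ cannot be permutations of one another. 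Both arguments are valid.
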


Taken together, \autoref{prop:frontier-indifference} and \autoref{prop:frontier-ranking} strengthen the maximality used to define the $r$-diversity frontier. Any two frontier distributions are equally $r$-diverse, while every frontier distribution is strictly more $r$-diverse than every distribution outside the frontier. The incompleteness of the $r$-targeting diversity relation can only play a role among non-maximal distributions. In particular, the frontier consists of \textit{optimal} distributions, rather than merely maximal ones. The following example illustrates the properties of the $r$-diversity frontier. 

\begin{example}\label{example:frontier}
Let $n=4$, let the population distribution be $x=(4,2,1,1)$, and let the desired group size be $q=6$. The budget set $\mathcal{B}(x,6)$ is then given by
\[
\left\{(4,2,0,0), (4,1,0,1), (3,2,0,1), (4,1,1,0), (3,2,1,0), (4,0,1,1), (3,1,1,1), (2,2,1,1)\right\}.
\]
Let $r=\left({5}/{12},1/4,1/4,{1}/{12}\right)$, so the target distribution is $qr=\left(5/2,3/2,3/2,1/2\right)$. The $r$-diversity frontier is given by
\[
\mathcal{F}_r(x,6)=\left\{
(3,2,1,0),
(3,1,1,1),
(2,2,1,1)
\right\}.
\]
The transformed frontier distributions are
\[
T^r(3,2,1,0)=(2,2,1,1), \quad T^r(3,1,1,1)=(2,1,1,2), \quad T^r(2,2,1,1)=(1,2,1,2).
\]
Because these transformed distributions are permutations of one another, the three frontier distributions are equally $r$-diverse.

Now consider $y=(4,1,1,0)$ and $z=(3,2,0,1)$. Their transformed distributions are
\[
T^r(y)=(3,1,1,1)\qquad\text{and}\qquad T^r(z)=(2,2,0,2),
\]
with neither transformed distribution majorizing the other. Thus, $y$ and $z$ are incomparable in terms of $r$-targeting diversity. Moreover, treating the normalized transformed distributions $T^r(y)/q$ and $T^r(z)/q$ as compositions,\footnote{Indices defined only on probability vectors are applied after a fixed common translation that makes every transformed vector in the budget set nonnegative, followed by normalization; both operations preserve majorization.} Shannon entropy ranks $y$ above $z$, the inverse Berger--Parker index ranks $z$ above $y$, and the Gini--Simpson index assigns them the same value. Finally, neither distribution is in the frontier---both $y$ and $z$ admit a strict local improvement:
\[
y+\chi_4-\chi_1=(3,1,1,1) \quad \text{ and } \quad z+\chi_3-\chi_4=(3,2,1,0),
\]
which shifts $y$ and $z$ into the frontier. 
\end{example}

\section{Reserves and Quotas}\label{sec:reserves-and-quotas}
We now apply our results to a selection problem in which the decision maker values diversity and merit. Institutions often pursue distributional objectives through reserves, setting aside type-specific slots or quotas, which cap the number selected from each type. We use the $r$-diversity frontier to endogenously derive type-specific reserves and quotas. We then characterize the optimality properties of the resulting reserve-and-quota policy.

Let $\mathcal{A}$ be a finite population of agents. Agents are ranked according to merit, which we refer to as \emph{priority}. Let $P$ denote the priority ranking, which is a strict linear order over $\mathcal{A}$. In addition to her priority, each agent has a type $\tau(a)$, where $\tau:\mathcal{A}\to\mathcal{N}$. For every subset $A\subseteq\mathcal{A}$, define its type distribution $\xi(A)\in\mathbb{Z}_+^n$ by
\[
\xi_i(A)\coloneqq\abs{\left\{a\in A:\tau(a)=i\right\}}
\]
for every $i\in\mathcal{N}$, with $\xi(\varnothing)=0$. Let $x\coloneqq\xi(\mathcal{A})$ denote the population-wide type distribution.

The decision maker selects a group of $q\in\mathbb{Z}_{++}$ agents, where $q\leq\abs{\mathcal{A}}$, and seeks to select the highest-priority agents according to $P$ while ensuring that the type distribution of the selected group is maximally $r$-diverse among the distributions in $\mathcal{B}(x,q)$. 

We first construct vectors of lower and upper bounds using the $r$-diversity frontier. To that end, define the lower-bound vector as the meet of the frontier
\[
L_r(x,q)\coloneqq \bigwedge\mathcal{F}_r(x,q)=\left(\min_{y\in\mathcal{F}_r(x,q)}y_1,\ldots,\min_{y\in\mathcal{F}_r(x,q)}y_n\right),
\]
and the upper-bound vector as the join of the frontier
\[
U_r(x,q)\coloneqq \bigvee\mathcal{F}_r(x,q)=\left(\max_{y\in\mathcal{F}_r(x,q)}y_1,\ldots,\max_{y\in\mathcal{F}_r(x,q)}y_n\right).
\]
Thus, every maximally $r$-diverse distribution that can be obtained by selecting $q$ agents from $\mathcal{A}$ contains at least $L_{r,i}(x,q)$ and at most $U_{r,i}(x,q)$ type-$i$ agents. By construction, $\norm{L_r(x,q)}\leq q\leq \norm{U_r(x,q)}$.  We interpret $L_r(x,q)$ and $U_r(x,q)$ as the reserve and quota vectors.

\begin{lemma}\label{lem:frontier-bounds}
For every feasible distribution $x\in\mathbb{Z}_+^n$ and desired group size $q\in\mathbb{Z}_{++}$ with $q\leq \norm{x}$, 
\[
\mathcal{F}_r(x,q)=\left\{y\in\mathcal{B}(x,q):L_r(x,q)\leq y\leq U_r(x,q)\right\}.
\]
\end{lemma}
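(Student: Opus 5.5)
The inclusion $\mathcal{F}_r(x,q)\subseteq\{y\in\mathcal{B}(x,q):L_r(x,q)\leq y\leq U_r(x,q)\}$ is immediate from the definitions. Every frontier distribution lies in $\mathcal{B}(x,q)$. Also, $L_r(x,q)$ and $U_r(x,q)$ are the coordinatewise minimum and maximum over $\mathcal{F}_r(x,q)$, so every $z\in\mathcal{F}_r(x,q)$ satisfies $L_r(x,q)\leq z\leq U_r(x,q)$.

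For the reverse inclusion, I plan to use the exchange property in \autoref{prop:frontier-indifference}$(i)$ together with a distance-minimization argument. Fix $y\in\mathcal{B}(x,q)$ with $L_r(x,q)\leq y\leq U_r(x,q)$. The frontier is finite and nonempty, so I choose $z\in\mathcal{F}_r(x,q)$ that minimizes the $\ell_1$ distance $\sum_{k\in\mathcal{N}}|z_k-y_k|$. Suppose, toward a contradiction, that $z\neq y$. Since $\norm{z}=\norm{y}=q$, there exist distinct types $i,j\in\mathcal{N}$ with $z_i>y_i$ and $z_j<y_j$.

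These inequalities give the hypotheses of \autoref{prop:frontier-indifference}$(i)$. On the one hand,
\[
z_i>y_i\geq L_{r,i}(x,q)=\min_{w\in\mathcal{F}_r(x,q)}w_i .
\]
On the other hand,
\[
z_j<y_j\leq U_{r,j}(x,q)=\max_{w\in\mathcal{F}_r(x,q)}w_j .
\]
Therefore \autoref{prop:frontier-indifference}$(i)$ yields $z'\coloneqq z-\chi_i+\chi_j\in\mathcal{F}_r(x,q)$.

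Because $z_i>y_i$ and $z_j<y_j$ are integer inequalities, the moves $z_i\mapsto z_i-1$ and $z_j\mapsto z_j+1$ each reduce the corresponding term $|z_k-y_k|$ by exactly one. All other coordinates are unchanged. Hence $\sum_k|z'_k-y_k|=\sum_k|z_k-y_k|-2$, which contradicts the choice of $z$. It follows that $y=z\in\mathcal{F}_r(x,q)$.

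The only substantive step is verifying the hypotheses of the exchange property. This goes through directly because the reserve and quota vectors are defined as the meet and join of the frontier, so I do not expect any real obstacle beyond invoking \autoref{prop:frontier-indifference}$(i)$ correctly.
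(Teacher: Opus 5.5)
Your proof is correct, but it takes a different route from the paper's.

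\textbf{The paper's route.} It first splits off the singleton case. It then uses unit width ($U_{r,i}(x,q)=L_{r,i}(x,q)+1$ on the non-degenerate coordinates) to see that $y$ and an arbitrary $z\in\mathcal{F}_r(x,q)$ differ by exactly one unit on $A=\{i:y_i>z_i\}$ and $B=\{j:y_j<z_j\}$. Next it invokes the quantitative ``moreover'' clause of \autoref{prop:frontier-indifference}$(i)$ to get $T^r_i(y)=T^r_j(z)$ and $T^r_j(y)=T^r_i(z)$ for $i\in A$, $j\in B$. From this it deduces $\abs{A}=\abs{B}$ and builds an explicit permutation showing that $T^r(y)$ is a rearrangement of $T^r(z)$. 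So $y$ is equally $r$-diverse with a frontier point and therefore lies on the frontier.

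\textbf{Your route.} You pick the frontier point nearest to $y$ in $\ell_1$ and use only the qualitative exchange conclusion $z-\chi_i+\chi_j\in\mathcal{F}_r(x,q)$ to shorten that distance by two. This needs neither unit width, nor the identities involving $T^r$, nor the permutation construction, nor the case split.

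\textbf{What each buys.} Your argument is shorter and more general. It shows that any finite set of integer vectors with a common coordinate sum, closed under exchanges of this ``inside the bounding box'' kind, contains every lattice point of its bounding box on that hyperplane, however wide the box is. The paper's longer computation shows directly why two points of the box are equally $r$-diverse: their transformed vectors differ only by exchanging the values on $A$ with those on $B$.
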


From Part $(i)$ of \autoref{prop:frontier-indifference}, $U_{r,i}(x,q)-L_{r,i}(x,q)\in\{0,1\}$ for every type $i\in \mathcal{N}$. Hence, \autoref{lem:frontier-bounds} provides a \emph{unit-width box characterization} of the frontier. Before we formalize the reserve-and-quota policy, let us first make two remarks relating our characterizations of the frontier to the apportionment and discrete-convexity literature.

\begin{remark}\label{rem:apportionment}
When $x_i\geq q$ for every $i\in\mathcal{N}$, no type-specific availability constraint binds, and the $r$-diversity frontier coincides with the set of largest-remainder (Hamilton) apportionments of $q$ slots across types with entitlements $qr$. Each type receives either the floor or the ceiling of its entitlement, and ties in the fractional remainders generate multiple frontier distributions \citep{balinskiyoung,pukelsheim}. In contrast, when an availability constraint binds---as in \autoref{example:frontier}---the frontier may differ from the Hamilton apportionments. The resulting reserves and quotas then depend on the realized population $x$, rather than on $qr$ alone.
\end{remark}

\begin{remark}\label{rem:dca}
The budget set $\mathcal{B}(x,q)$ is $M$-convex, and $\Phi_g^r$ is $M$-concave on this set for each fixed $g\in\mathcal{G}$ \citep{murota2003}. Thus, for each separable index $\Phi_g^r$, $M$-concavity provides an index-specific local-improvement criterion, together with an exchange property for the index's maximizers. By contrast, \autoref{prop:quadratic-characterization} and its extension to arbitrary Schur-concave functions show that every distribution in the $r$-diversity frontier maximizes every Schur-concave index. Accordingly, \autoref{prop:frontier-indifference} and \autoref{prop:frontier-ranking} establish exchange and local-improvement properties that hold uniformly across all Schur-concave indices---including non-separable functions that need not be $M$-concave. Moreover, the unit-width box characterization of \autoref{lem:frontier-bounds} does not follow from $M$-concavity alone and is what makes implementation through type-specific reserves and quotas possible.
\end{remark}




We now describe the reserve-and-quota policy, which  selects agents as follows:
\begin{quoting}[leftmargin=.5cm, rightmargin=0cm]
\begin{description}[parsep=0pt, itemsep=1pt, leftmargin=0em]
\item[Process reserved slots:] For each $i\in\mathcal{N}$, fill the $L_{r,i}(x,q)$ reserved slots by selecting the highest-priority type-$i$ agents.
\item[Process open slots:] Fill the remaining $q-\norm{L_r(x,q)}$ open slots by selecting the highest-priority remaining agents, regardless of type, subject to selecting at most $U_{r,i}(x,q)$ type-$i$ agents in total for every $i\in\mathcal{N}$.
\end{description}
\end{quoting}

Let $A^{RQ}\subseteq\mathcal{A}$ denote the set of agents selected by the reserve-and-quota policy. As we show next, $A^{RQ}$ lies on the diversity-merit Pareto frontier. To that end, we first extend the priority ranking from individual agents to sets of agents. Given two sets $A,A'\subseteq\mathcal{A}$ with $\abs{A}=\abs{A'}=q$, label their respective agents in descending priority order: $a_1\mathrel{P}\cdots \mathrel{P} a_{q}$ and $a_1'\mathrel{P}\cdots \mathrel{P} a_{q}'$. We say that $A$ \emph{priority dominates} $A'$ if either $a_k=a_k'$ or $a_k \mathrel{P} a_k'$ for every $k=1,\ldots, q$. Intuitively, a set $A$ priority dominates $A'$ if, rank by rank, it contains agents with weakly higher priority.

\begin{proposition}\label{prop:optimality}
The agents selected by the reserve-and-quota policy, $A^{RQ}$, satisfy $\xi(A^{RQ})\in\mathcal{F}_r(x,q)$. Moreover, for any alternative selection $A\subseteq\mathcal{A}$ such that $\abs{A}=q$, at least one of the following holds:
\begin{enumerate}[label={$(\roman*)$}]
\item $\xi(A^{RQ})\mathrel{\rhd_r} \xi(A)$.
\item $A^{RQ}$ priority dominates $A$. 
\end{enumerate}
\end{proposition}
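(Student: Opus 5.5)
Write $L=L_r(x,q)$, $U=U_r(x,q)$ and $y^{RQ}=\xi(A^{RQ})$. The plan has two parts: first show that $y^{RQ}$ lies in the box of \autoref{lem:frontier-bounds}, then compare $A^{RQ}$ with an arbitrary $A$ by splitting on whether $\xi(A)$ is in the frontier.

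\textbf{Frontier membership.} The reserve stage is feasible because $L\le U\le x$ componentwise; the second inequality holds since every frontier element lies in $\mathcal{B}(x,q)$. The open stage can fill all $q-\norm{L}$ remaining slots. Indeed, the capacity still available under the caps is
\[
\sum_i \min(U_i,x_i)-\norm{L}=\norm{U}-\norm{L}\ge q-\norm{L}.
\]
Hence $|A^{RQ}|=q$ and $L\le y^{RQ}\le U$, so \autoref{lem:frontier-bounds} gives $y^{RQ}\in\mathcal{F}_r(x,q)$.

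\textbf{Case 1: $\xi(A)\notin\mathcal{F}_r(x,q)$.} Part~(ii) of \autoref{prop:frontier-ranking} gives $y^{RQ}\rhd_r\xi(A)$ directly, which is alternative (i).

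\textbf{Case 2: $\xi(A)\in\mathcal{F}_r(x,q)$.} Here I show that $A^{RQ}$ priority dominates $A$. Let $S=\{B\subseteq\mathcal{A}:|B|=q,\ L\le\xi(B)\le U\}$, which by \autoref{lem:frontier-bounds} is exactly the family of frontier selections. This family is the set of bases of a matroid: it consists of the $q$-element sets respecting per-type lower and upper bounds, i.e.\ the rank-$q$ truncation of a generalized partition matroid. Equivalently, one can check the basis exchange property directly using the unit-width structure $U_i-L_i\in\{0,1\}$.

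The key claim is that $A^{RQ}$ coincides with the greedy (priority-order) basis of this matroid. Once that holds, the standard Gale optimality of the greedy basis implies rank-by-rank domination of every other basis, i.e.\ priority dominance. I expect this identification to be the main obstacle, and I would prove the domination directly by an exchange argument rather than rely on matroid terminology.

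Suppose, for contradiction, that $a_k'\mathrel{P}a_k$ at some rank $k$, where $a_k\in A^{RQ}$ and $a'_k\in A$, taking $k$ minimal. Among $a_1',\dots,a_k'$ there is then an agent $b\notin A^{RQ}$ with $b\mathrel{P}a_k$. Let $i=\tau(b)$. I would show that $b$ could only have been rejected because type $i$ had already reached its quota $U_i$ with agents of priority above $b$. The alternative reason is that all open slots were filled before $b$'s turn, and this is excluded because those slots were filled by agents ranked above $b$ and hence above $a_k$.

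So $A^{RQ}$ contains $U_i$ type-$i$ agents above $b$. Moreover, $A$ contains $b$ together with every agent of $A$ ranked above $b$, since these are among $a'_1,\dots,a'_k$. Counting type-$i$ agents among the top-$k$ agents of each set, and using minimality of $k$ so that the two sets agree up to the appropriate ranks, this forces $\xi_i(A)>U_i$, contradicting $\xi(A)\le U$. A symmetric counting step handles the reserved slots: type-$i$ reserved agents in $A^{RQ}$ are the top $L_i$ type-$i$ agents, and $A$ must also contain at least $L_i$ type-$i$ agents. Getting this counting right when reserves and open slots interleave in priority order is the delicate part of the argument.
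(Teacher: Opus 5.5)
Your frontier-membership step and Case~1 match the paper. The gap is in Case~2.

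The matroid observation is sound. Had you shown that the reserve-then-open procedure is exactly the priority-order greedy algorithm for that matroid, Gale's theorem would finish the proof. (Independence of $B$ amounts to $\xi(B)\le U$ and $\sum_j\max\{L_j,\xi_j(B)\}\le q$, and one can check that the two procedures accept the same agents.) But you leave this identification open, and the direct exchange argument you substitute does not go through:
\begin{enumerate}[label=$(\alph*)$]
\item The open-slot case is not excluded by your remark. If every open slot went to an agent above $b$, then $a_k$ is simply a reserved agent ranked below $b$, which contradicts nothing.
\item In the quota case, minimality of $k$ gives only that $a_m$ is weakly above $a'_m$ for $m<k$, not that the two sets agree. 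Nothing forces $A$ to contain the $U_i$ type-$i$ agents of $A^{RQ}$ lying above $b$: $A$ can drop one of them and include $b$, keeping $\xi_i(A)\le U_i$.
\end{enumerate}
In fact a type at its quota can never be the culprit. $A^{RQ}$ always contains the highest-priority agents of each type it selects, so for any upper contour set $H$ of $P$ containing $b$ you have $\xi_i(A^{RQ}\cap H)=U_i\ge\xi_i(A\cap H)$.

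The contradiction actually lives in the case you dismissed, and it needs an aggregate count over all types rather than a count within one type. Let $H$ be the set of agents weakly above $a'_k$, and let $h_\ell\coloneqq\xi_\ell(H)$. Then $\abs{A\cap H}=k$, while $\abs{A^{RQ}\cap H}\le k-1$ because $a'_k\mathrel{P}a_k$.
\begin{itemize}
\item Some type $j$ has $\xi_j(A\cap H)>\xi_j(A^{RQ}\cap H)=\min\{\xi_j(A^{RQ}),h_j\}$.
\item With the unit-width box, this forces $\xi_j(A^{RQ})=L_j$ and $\xi_j(A)=L_j+1$. It also places the $(L_j+1)$-st highest-priority type-$j$ agent in $H\setminus A^{RQ}$.
\item Type $j$ never reached its quota in $A^{RQ}$, so that agent was turned away because the open slots had run out. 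Hence all $q-\norm{L}$ open-slot agents lie in $H$.
\item Therefore $\abs{A^{RQ}\cap H}=\sum_\ell\min\{L_\ell,h_\ell\}+q-\norm{L}\ge\abs{A\cap H}$, since exactly $q-\norm{L}$ types sit at $L_\ell+1$ in $\xi(A)$ and the rest at $L_\ell$. This is a contradiction.
\end{itemize}

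The paper avoids this case analysis altogether. It replaces $A$ by $\widehat A$, which has the same type distribution but uses the top $\xi_i(A)$ agents of each type, so $\widehat A$ priority dominates $A$. It then notes that $\widehat A$ and $A^{RQ}$ share the reserved agents. They differ only in which $q-\norm{L}$ agents they take from the set of $(L_i+1)$-st highest-priority agents of the types with $U_i=L_i+1$, and $A^{RQ}$ takes the highest-priority ones. Transitivity of priority dominance then gives the result.
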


The reserve-and-quota policy reconciles the diversity and merit objectives by following a two-step procedure. The policy first restricts attention to selections with maximally $r$-diverse type distributions, then uses priority to select among them. \autoref{prop:optimality} establishes a global optimality property: no alternative selection improves upon the reserve-and-quota outcome according to either diversity or merit without sacrificing the other objective.

We conclude by returning to \autoref{example:frontier} to illustrate the reserve-and-quota policy. Let $\mathcal{A}=\{a_1,a_2,a_3,a_4,b_1,b_2,c_1,d_1\}$, where each $a_k$ is type $1$, each $b_k$ is type $2$, $c_1$ is type $3$, and $d_1$ is type 4. Thus, $\xi(\mathcal{A})=(4,2,1,1)=x$. Suppose the priority ranking is
\[
a_1\mathrel{P}b_1\mathrel{P}c_1\mathrel{P}a_2\mathrel{P}a_3\mathrel{P}a_4\mathrel{P}b_2\mathrel{P}d_1.
\]

Recall that $\mathcal{F}_r(x,6)=\left\{(3,2,1,0), (3,1,1,1),(2,2,1,1) \right\}$. Therefore, $L_r(x,6)=(2,1,1,0)$ and $U_r(x,6)=(3,2,1,1)$. The type-1 reserved slots are filled by $a_1$ and $a_2$; type-2 reserves are filled by $b_1$; and type-3 reserves are filled by $c_1$. There are two open slots, which are filled by $a_3$ and $b_2$. Agent $a_4$ is rejected despite having higher priority than $b_2$ because the type-$1$ quota is binding. Hence, $A^{RQ}=\{a_1,a_2,a_3,b_1,b_2,c_1\}$ with $\xi(A^{RQ})=(3,2,1,0)\in\mathcal{F}_r(x,6)$.

Every other six-agent selection either priority dominates $A^{RQ}$ but is strictly less $r$-diverse---for example, $\{a_1,a_2,a_3,a_4,b_1,c_1\}$---is equally $r$-diverse but priority dominated by $A^{RQ}$---for example, $\{a_1,a_2,a_3,b_1,c_1,d_1\}$---or is both priority dominated by and strictly less $r$-diverse than $A^{RQ}$---for example, $\{a_1,a_2,a_3,b_1,b_2,d_1\}$.

\newpage
\appendix
\section{Appendix}
\noindent \begin{proof}[Proof of \autoref{prop:majorization}]
Note that $T^r$ is invertible with $[T^r]^{-1}(z)\coloneqq z+\norm{z}(r-u)$. Consider two distributions $z, z'\in\mathbb{R}^n$ such that $z'$ is obtained from $z$ via an $r$-targeting Robin Hood transfer. Then there exist types $i,j\in\mathcal{N}$ with $d_i^r(z)\geq d_j^r(z)$, and a constant $0\leq \delta\leq d_i^r(z)- d_j^r(z)$ such that $z'=z+\delta \chi_j-\delta\chi_i$. 

Because $T^r$ preserves total mass,
\[
d_i^u(T^r(z))-d_j^u(T^r(z))=d_i^r(z)-d_j^r(z)\geq \delta
\]
and $T^r(z')=T^r(z)+\delta\chi_j-\delta\chi_i$. Thus, $z'$ is obtained from $z$ via an $r$-targeting Robin Hood transfer if and only if $T^r(z')$ is obtained from $T^r(z)$ via a standard ($u$-targeting) Robin Hood transfer, where the reverse direction follows from the same identities and the invertibility of $T^r$. Consequently, if $z'$ is obtained from $z$ via an $r$-targeting Robin Hood transfer, then $T^r(z)\succ T^r(z')$.

Now consider any $x,y\in\mathbb{R}^n$ such that $y\mathrel{\unrhd_r} x$. Then, there exists a finite sequence $\{z_k\}_{k=0}^K$ with $z_0 = x$, $z_K = y$, such that for each $k \in \{1, \ldots, K\}$, $T^r(z_{k-1})\succ T^r(z_{k})$. By transitivity, we conclude $T^r(x)\succ T^r(y)$.

Conversely, consider any $x,y\in\mathbb{R}^n$ such that $T^r(x)\succ T^r(y)$. Then there exists a finite sequence $\{z_k\}_{k=0}^K$ with $z_0=T^r(x)$ and $z_K=T^r(y)$ such that, for each $k$, $z_k$ is obtained from $z_{k-1}$ via a standard Robin Hood transfer \citep[Theorem B.2]{marshallolkin}. Then $[T^r]^{-1}(z_0) = x$, $[T^r]^{-1}(z_K) = y$, and for each $k$, $[T^r]^{-1}(z_k)$ is obtained from $[T^r]^{-1}(z_{k-1})$ via an $r$-targeting Robin Hood Transfer. Thus, $y\unrhd_r x$.
\end{proof}\medskip

\begin{lemma}\label{lem:quadratic-exchange}
Fix $x\in\mathbb{Z}_+^n$ and $q\in\mathbb{Z}_{++}$ with $q\leq\norm{x}$. For each $g\in\mathcal{G}$, define
\[
M^r_g(x,q)\coloneqq\argmax_{y\in\mathcal{B}(x,q)}\Phi^r_g(y).
\]
For any $y\in\mathcal{B}(x,q)$, $y\notin M^r_g(x,q)$ for some  $g\in\mathcal{G}$ if and only if there exist $i,j\in\mathcal{N}$ such that 
\[
y+\chi_j-\chi_i\in\mathcal{B}(x,q)\quad \text{ and } \quad y+\chi_j-\chi_i\mathrel{\rhd_r}y.
\]
\end{lemma}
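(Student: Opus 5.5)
The plan is to prove the two directions separately. Throughout, I will use that for $y,y'\in\mathcal{B}(x,q)$ we have $\norm{y}=\norm{y'}=q$, so $T^r(y')-T^r(y)=y'-y$. Write $w\coloneqq T^r(y)$. By the proof of \autoref{prop:majorization}, the swap $y'=y+\chi_j-\chi_i$ is an $r$-targeting Robin Hood transfer with $\delta=1$ exactly when $w_i-w_j\geq 1$.

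\emph{Sufficiency.} Suppose $y'=y+\chi_j-\chi_i\in\mathcal{B}(x,q)$ and $y'\mathrel{\rhd_r}y$. Take the strictly concave function $g(t)=-t^2$. By the convex-function characterization quoted before \autoref{prop:quadratic-characterization}, $\Phi^r_g(y')>\Phi^r_g(y)$. Hence $y\notin M^r_g(x,q)$ for this $g$.

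\emph{Necessity,} by contraposition. Assume no single swap in the budget set yields a strictly more $r$-diverse distribution. First I translate this into a local condition on $w$. Take any $i$ with $y_i>0$ and any $j\neq i$ with $y_j<x_j$, so that the swap stays in $\mathcal{B}(x,q)$. If $w_i-w_j>1$, the swap moves $\delta=1<w_i-w_j$. It is therefore a Robin Hood transfer that is not a coordinate permutation, so $T^r(y)\succ_s T^r(y')$. By \autoref{prop:majorization}, this gives $y'\mathrel{\rhd_r}y$, which is excluded. (To check strictness, note that $\sum_k w_k^2$ strictly drops, so $T^r(y')$ cannot majorize $T^r(y)$.) Hence
\[
w_j\geq w_i-1\quad\text{whenever } y_i>0 \text{ and } y_j<x_j .
\]

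Now fix any $g\in\mathcal{G}$ and any $z\in\mathcal{B}(x,q)$, and set $D\coloneqq\{i:z_i<y_i\}$ and $I\coloneqq\{j:z_j>y_j\}$. Since $\norm{z}=\norm{y}$, the total decrease over $D$ equals the total increase over $I$; call it $m$. Every $i\in D$ has $y_i>0$, and every $j\in I$ has $y_j<z_j\leq x_j$, so the local condition applies to each pair $(i,j)\in D\times I$. I then bound the two parts of the change in $\Phi^r_g$ using concavity of $g$:
\[
g(w_j+k)-g(w_j)\leq k\,[g(w_j+1)-g(w_j)] \quad\text{for } j\in I,
\]
\[
g(w_i-k)-g(w_i)\leq -k\,[g(w_i)-g(w_i-1)] \quad\text{for } i\in D .
\]
Let $\sigma^+\coloneqq\max_{j\in I}[g(w_j+1)-g(w_j)]$ and $\sigma^-\coloneqq\min_{i\in D}[g(w_i)-g(w_i-1)]$. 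Summing the bounds gives
\[
\Phi^r_g(z)-\Phi^r_g(y)\leq m\,(\sigma^+-\sigma^-).
\]

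The key step is to show $\sigma^+\leq\sigma^-$. Since $w_j\geq w_i-1$, the unit interval $[w_j,w_j+1]$ lies weakly to the right of $[w_i-1,w_i]$. Slopes of chords of a concave function decrease as the interval moves right, so $g(w_j+1)-g(w_j)\leq g(w_i)-g(w_i-1)$ for every pair $(i,j)\in D\times I$. Hence $\Phi^r_g(z)\leq\Phi^r_g(y)$ for every $z$ and every $g$, so $y\in M^r_g(x,q)$ for all $g\in\mathcal{G}$, which completes the contraposition.

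The main obstacle is the translation between a strictly improving single swap and the local condition on $w$, namely that such a swap exists exactly when $w_i-w_j>1$ for some admissible pair. This hinges on the boundary case $w_i-w_j=1$. There the swap merely permutes two coordinates of $T^r(y)$, so the two distributions are equally $r$-diverse and the swap is not a strict improvement.
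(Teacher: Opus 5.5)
Your proof is correct and follows essentially the same route as the paper. Sufficiency uses a strictly concave $g$, and necessity uses the same concavity bound on the unit increments of $g$ over the coordinates where $y$ and $z$ differ, with the extreme pair of types determining the sign. The only difference is presentational: you argue the contrapositive, showing that the local condition $T^r_j(y)\geq T^r_i(y)-1$ for all admissible pairs forces $\Phi^r_g(z)\leq\Phi^r_g(y)$ for every $z$ and every $g$. The paper instead fixes a maximizer $z$ and shows directly that the extreme pair $(i^*,j^*)$ has a gap exceeding one.
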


\begin{proof}
Because $\mathcal{B}(x,q)$ is finite and nonempty, $M^r_g(x,q)$ is nonempty for any $g\in\mathcal{G}$. Fix $y\in\mathcal{B}(x,q)$.\medskip

\noindent\emph{``If'' direction:}
Suppose there exist $i,j\in\mathcal{N}$ satisfying the antecedent of the lemma. Let $y'\coloneqq  y+\chi_{j}-\chi_{i}$. Since $y' \mathrel{\rhd_r} y$, by \autoref{prop:majorization}, we have $T^r(y) \succ_s T^r(y')$. From \cite[Lemma C.1.a]{marshallolkin}, any strictly concave $g\in\mathcal{G}$ yields $\Phi^r_g(y')>\Phi^r_g(y)$. Thus, $y\notin M^r_g(x,q)$.  
\medskip

\noindent\emph{``Only if'' direction:} Suppose $y\notin M_g^r(x,q)$ for some $g\in\mathcal{G}$. Since $M_g^r(x,q)\neq\varnothing$, take any $z\in M_g^r(x,q)$. Define
\[
A\coloneqq\left\{i\in\mathcal{N}:y_i>z_i\right\}
\qquad\text{and}\qquad
B\coloneqq\left\{j\in\mathcal{N}:y_j<z_j\right\}.
\]
Because $y\neq z$ and $\norm{y}=\norm{z}=q$, both $A$ and $B$ are nonempty. For each $i\in A$, let $a_i\coloneqq y_i-z_i$ and, for each $j\in B$, let $b_j\coloneqq z_j-y_j$. Then
\[
\sum_{i\in A}a_i=\sum_{j\in B}b_j>0.
\]

Define the unit increment of $g$ by
\[
\nabla_g(t)\coloneqq g(t+1)-g(t).
\]
Because $g$ is concave, $\nabla_g$ is weakly decreasing. Moreover, because $\norm{y}=\norm{z}$, $T_i^r(z)=T_i^r(y)-a_i$ for every $i\in A$, and $T_j^r(z)=T_j^r(y)+b_j$ for every $j\in B$.   

Choose $i^*\in\argmax_{i\in A}T_i^r(y)$ and $j^*\in\argmin_{j\in B}T_j^r(y)$. For every $i\in A$ and $m\in\{1,\ldots,a_i\}$,
\begin{equation}\label{eq:eq1}
 T_i^r(y)-m\leq T_{i^*}^r(y)-1,   
\end{equation}
whereas, for every $j\in B$ and $m\in\{0,\ldots,b_j-1\}$,
\begin{equation}\label{eq:eq2}
    T_j^r(y)+m\geq T_{j^*}^r(y).
\end{equation}

Since $z\in M^r_g(x,q)$ and $y\notin M^r_g(x,q)$, we have
\begin{align*}
0&>\Phi^r_g(y)-\Phi^r_g(z)\\[6pt]
&=\sum_{\ell\in A\cup B}\left[g\big(T_\ell^r(y)\big)-g\big(T_\ell^r(z)\big)\right]\\[6pt]
&=\sum_{i\in A}\sum_{m=1}^{a_i}\nabla_g\big(T_i^r(y)-m\big)-\sum_{j\in B}\sum_{m=0}^{b_j-1}\nabla_g\big(T_j^r(y)+m\big)\\[6pt]
&\geq\nabla_g\big(T_{i^*}^r(y)-1\big)\sum_{i\in A}a_i-\nabla_g\big(T_{j^*}^r(y)\big)\sum_{j\in B} b_j\\[6pt]
&=\left(\nabla_g\big(T_{i^*}^r(y)-1\big)-\nabla_g\big(T_{j^*}^r(y)\big)\right)\sum_{i\in A} a_i,
\end{align*}
where the first equality follows because $T^r_i(y)=T^r_i(z)$ for all $i\notin A\cup B$, the second inequality follows from monotonicity  of $\nabla_g$ and \eqref{eq:eq1} and \eqref{eq:eq2}, and the last equality follows from the fact that $\sum_{i\in A} a_i=\sum_{j\in B}b_j>0$. Therefore, the above chain of inequalities implies that 
\[
\nabla_g\big(T_{i^*}^r(y)-1\big)<\nabla_g\big(T_{j^*}^r(y)\big)
\]
which is only possible if $T^r_{i^*}(y)-T^r_{j^*}(y)>1$.

Now define $y'\coloneqq y+\chi_{j^*}-\chi_{i^*}$. Clearly, $y'\in\mathbb{Z}_+^n$ and $\norm{y'}=\norm{y}=q$. Since $i^*\in A$ and $j^*\in B$, we have $y_{i^*}\geq 1$ and $y_{j^*}+1\leq x_{j^*}$. Hence, $y'\in\mathcal{B}(x,q)$. Finally, notice that $T^r(y')=T^r(y)+\chi_{j^*}-\chi_{i^*}$. Thus, $T^r(y')$ is obtained from $T^r(y)$ via a standard Robin Hood transfer with $\delta=1<T^r_{i^*}(y)-T^r_{j^*}(y)$. Hence, $T^r(y)\succ_sT^r(y')$. By \autoref{prop:majorization},  $y'\rhd_r y$.
\end{proof}\medskip

\begin{proof}[Proof of \autoref{prop:quadratic-characterization}]
Take any $y\in\cap_{g\in\mathcal{G}}M_g^r(x,q)$. If there exists a $z\in\mathcal{B}(x,q)$ such that $z\mathrel{\rhd_r}y$, then \autoref{prop:majorization} and strict Schur concavity imply $\Phi_g^r(z)>\Phi_g^r(y)$ for every strictly concave $g\in\mathcal{G}$, contradicting $y\in M_g^r(x,q)$. Hence, $\cap_{g\in\mathcal{G}}M_g^r(x,q)\subseteq\mathcal{F}_r(x,q)$.

Conversely, if $y\notin M_g^r(x,q)$ for some $g\in\mathcal{G}$, \autoref{lem:quadratic-exchange} yields a feasible distribution that is strictly more $r$-diverse than $y$. Thus, $y\notin\mathcal{F}_r(x,q)$. Taking the contrapositive yields $\mathcal{F}_r(x,q)\subseteq \cap_{g\in\mathcal{G}}M^r_g(x,q)$. 
\end{proof}\medskip



\begin{proof}[Proof of \autoref{prop:frontier-indifference}] Fix any $y\in\mathcal{F}_r(x,q)$.\medskip

\noindent \emph{Part $(i)$:} Suppose there exist distinct $i,j\in \mathcal{N}$ such that $y_i>\min_{w\in\mathcal{F}_r(x,q)}w_i$ and $y_j<\max_{w\in\mathcal{F}_r(x,q)}w_j$. Thus, there exist $z,z'\in\mathcal{F}_r(x,q)$ such that $\min_{w\in\mathcal{F}_r(x,q)}w_i=z_i$ and $\max_{w\in\mathcal{F}_r(x,q)}w_j=z'_j$.

Since $\norm{y}=\norm{z}$ and $y_i>z_i$, there exists $k\in\mathcal{N}$ such that $y_k<z_k$. Then, $y-\chi_i+\chi_k\in\mathcal{B}(x,q)$. If $T_i^r(y)-T_k^r(y)>1$, then $y-\chi_i+\chi_k\mathrel{\rhd_r}y$, contradicting $y\in\mathcal{F}_r(x,q)$. Thus,
\[
T_i^r(y)-T_k^r(y)\leq 1.
\]
An analogous argument establishes that
\[
T_k^r(z)-T_i^r(z)\leq 1.
\]
On the other hand, because $\norm{y}=\norm{z}$,
\[
T_i^r(y)-T_i^r(z)\geq1
\]
and
\[
T_k^r(z)-T_k^r(y)\geq1.
\]
Combining these four inequalities yields
\[
\Big[T_i^r(y)-T_i^r(z)\Big]+\Big[T_k^r(z)-T_k^r(y)\Big]=2.
\]
Because both terms in the square brackets are positive integers, each must equal one. Therefore,  
\begin{equation}\label{eq:oneequality}
    T^r_i(y)=T^r_i(z)+1=T^r_k(z)=T^r_k(y)+1, 
\end{equation}
and 
\[
T_i^r(y)-T_i^r(z)=1\Longleftrightarrow \quad y_i  =\underbrace{z_i+1}_{\min_{w\in\mathcal{F}_r(x,q)}w_i+1}.
\]
Similarly, since $\norm{y}=\norm{z'}$ and $y_j<z'_j$, there exists $k'\in\mathcal{N}$ such that $y_{k'}>z'_{k'}$,  
\begin{equation}\label{eq:twoequality}
T^r_{k'}(y)=T^r_{k'}(z')+1=T^r_j(z')=T^r_j(y)+1, 
\end{equation} 
and 
\[
T_j^r(z')-T_j^r(y)=1\Longleftrightarrow \quad y_j  =\underbrace{z'_j-1}_{\max_{w\in\mathcal{F}_r(x,q)}w_j-1}.
\]

Notice $y_{k'}>z_{k'}'$ and $y_k<z_k$ imply $y-\chi_{k'}+\chi_k\in\mathcal{B}(x,q)$. If $T_{k'}^r(y)-T_k^r(y)>1$, then $y-\chi_{k'}+\chi_k\rhd_r y$, contradicting $y\in\mathcal{F}_r(x,q)$. Thus,
\[
T_{k'}^r(y)-T_k^r(y)=T^r_j(y)+1-(T^r_i(y)-1)\leq 1,
\]
where the equality follows from \eqref{eq:oneequality} and \eqref{eq:twoequality}. Hence, $T^r_i(y)\geq T^r_j(y)+1$.

Similarly, $y_{i}>z_{i}$ and $y_j<z_j'$ imply $y-\chi_{i}+\chi_j\in\mathcal{B}(x,q)$. If $T_i^r(y)>T_j^r(y)+1$, then $y-\chi_i+\chi_j\mathrel{\rhd_r}y$, contradicting $y\in\mathcal{F}_r(x,q)$. Thus,
\[
T_i^r(y)=T_j^r(y)+ 1.
\]
Finally, $T^r(y-\chi_i+\chi_j)=T^r(y)-\chi_i+\chi_j$ is obtained by permuting coordinates $i$ and $j$ of $T^r(y)$. Hence, $T^r(y-\chi_i+\chi_j)$ and $T^r(y)$ majorize each other, implying $y-\chi_i+\chi_j\in\mathcal{F}_r(x,q)$, as desired.\medskip

\noindent\emph{Part $(ii)$:} Consider any $z\in\mathcal{F}_r(x,q)$. The result is trivial if $z=y$, so assume that $z\neq y$. By \autoref{prop:quadratic-characterization}, both $y$ and $z$ are maximizers of every separable Schur-concave function. By the convex-function characterization of majorization \citep[Proposition~4.B.1]{marshallolkin}, $T^r(y)\succ T^r(z)$ and $T^r(z)\succ T^r(y)$. By \autoref{prop:majorization}, $y$ and $z$ are equally $r$-diverse. 
\end{proof}\medskip

\begin{proof}[Proof of \autoref{prop:frontier-ranking}] Fix $y\in \mathcal{B}(x,q)\backslash\mathcal{F}_r(x,q)$ and $z\in \mathcal{F}_r(x,q)$.  \medskip

\noindent \emph{Part $(i)$:} By \autoref{prop:quadratic-characterization}, $y\notin M^r_g(x,q)$ for some $g\in\mathcal{G}$. The result then immediately follows from \autoref{lem:quadratic-exchange}.\medskip

\noindent \emph{Part $(ii)$:} 
 Since $z\in \mathcal{F}_r(x,q)$ and $y\notin \mathcal{F}_r(x,q)$, \autoref{prop:quadratic-characterization} implies that $\Phi^r_g(z)\geq \Phi^r_g(y)$ for all $g\in\mathcal{G}$, and strictly for at least one. Thus, by the convex-function characterization of majorization \citep[Proposition~4.B.1]{marshallolkin}, $T^r(y)\succ T^r(z)$. Moreover, $T^r(z)\mathrel{\slashed \succ} T^r(y)$: otherwise $T^r(y)$ and $T^r(z)$ would be permutations of one another, implying $\Phi^r_g(z)=\Phi^r_g(y)$ for every $g\in\mathcal{G}$. Hence $T^r(y)\succ_s T^r(z)$. By \autoref{prop:majorization}, we have $z\mathrel{\rhd_r} y$, as desired. 
\end{proof}\medskip

\begin{proof}[Proof of \autoref{lem:frontier-bounds}]
Suppose first that $y\in\mathcal{F}_r(x,q)$. By the definitions of the meet and join, $L_r(x,q)\leq y\leq U_r(x,q)$.

Conversely, let $y\in\mathcal{B}(x,q)$ with $L_r(x,q)\leq y\leq U_r(x,q)$. First, suppose $U_{r,i}(x,q)=L_{r,i}(x,q)$ for all $i\in\mathcal{N}$. Then $\mathcal{F}_r(x,q)$ is a singleton with $\mathcal{F}_r(x,q)=\{y\}$.

Next, suppose there exists a nonempty subset $I\subseteq\mathcal{N}$ such that $L_{r,i}(x,q)<U_{r,i}(x,q)$ if and only if $i\in I$. This implies that $\abs{\mathcal{F}_r(x,q)}>1$. Moreover, from Part $(i)$ of \autoref{prop:frontier-indifference}, $U_{r,i}(x,q)=L_{r,i}(x,q)+1$ for all $i\in I$.

Take any $z\in\mathcal{F}_r(x,q)$ such that $z\neq y$. Define
\[
A\coloneqq\left\{i\in \mathcal{N}:y_i>z_i\right\}\qquad\text{and}\qquad
B\coloneqq\left\{j\in \mathcal{N}:y_j<z_j\right\}.
\]
Since $z\neq y$ and $\norm{z}=\norm{y}$, $A$ and $B$ are nonempty. Moreover, $A\cup B\subseteq I$.

Because $L_r(x,q)\leq y \leq U_r(x,q)$ and $L_r(x,q)\leq z\leq U_r(x,q)$, and because the lower and upper bound vectors differ by one agent in any coordinate in $I$, we have $y_i=z_i+1$ for all $i\in A$ and $y_j+1=z_j$ for all $j\in B$. Using the linearity of $T^r$ and the fact that $\norm{y}=\norm{z}$, we have $T^r_i(y)=T^r_i(z)+1$ for all $i\in A$ and $T^r_j(y)+1=T^r_j(z)$ for all $j\in B$.

Furthermore, Part $(i)$ of \autoref{prop:frontier-indifference} implies that for $i\in A$ and $j\in B$, $T^r_j(z)= T^r_i(z)+1$. Consequently, for all $i\in A$ and $j\in B$,
\[
T^r_i(y)=T^r_i(z)+1=T^r_j(z)=T^r_j(y)+1.
\]

Finally, for all $\ell\notin A\cup B$, $T^r_\ell(y)=T^r_\ell(z)$, and 
\begin{align*}
0&=\norm{T^r(y)}-\norm{T^r(z)}=\sum_{i\in A} \underbrace{\Big(T^r_i(y)-T^r_i(z)\Big)}_{=1}+\sum_{j\in B} \underbrace{\Big(T^r_j(y)-T^r_j(z)\Big)}_{=-1},
\end{align*}
implying that $\abs{A}=\abs{B}$. 

Let $\widehat{\pi}:A\to B$ be any bijection, and define $\pi:\mathcal{N}\to\mathcal{N}$ by
\[
\pi(i)\coloneqq
\begin{cases}
i, & \text{if }i\notin A\cup B,\\
\widehat{\pi}(i), & \text{if }i\in A,\\
\widehat{\pi}^{-1}(i), & \text{if }i\in B.
\end{cases}
\]
For every $i\in\mathcal{N}$, we have $T_i^r(y)=T_{\pi(i)}^r(z)$. Thus, $T^r(y)$ and $T^r(z)$ are permutations of one another. By \autoref{prop:majorization}, $y\unrhd_r z$ and $z\unrhd_r y$. Since $z\in\mathcal{F}_r(x,q)$, and $y$ and $z$ are equally $r$-diverse, we conclude that $y\in\mathcal{F}_r(x,q)$.
\end{proof}\medskip

\begin{proof}[Proof of \autoref{prop:optimality}]
Because $U_r(x,q)\leq x$ and
\[
q-\norm{L_r(x,q)}\leq \norm{U_r(x,q)}-\norm{L_r(x,q)},
\]
there are enough remaining agents to fill all open slots without violating the quotas. Hence, $\abs{A^{RQ}}=q$ and $\xi(A^{RQ})\in\mathcal{B}(x,q)$. By construction, $L_r(x,q)\leq \xi(A^{RQ})\leq U_r(x,q)$. Hence, by \autoref{lem:frontier-bounds}, $\xi(A^{RQ})\in\mathcal{F}_r(x,q)$.

Next, consider any selection $A\subseteq\mathcal{A}$ such that $\abs{A}=q$ and $A\neq A^{RQ}$. Suppose first that $\xi(A)\notin\mathcal{F}_r(x,q)$. By Part $(ii)$ of \autoref{prop:frontier-ranking}, $\xi(A^{RQ})\mathrel{\rhd_r}\xi(A)$.


Next, suppose $\xi(A)\in\mathcal{F}_r(x,q)$. For ease of notation, let $L\coloneqq L_r(x,q)$ and $U\coloneqq U_r(x,q)$.

By Part $(i)$ of \autoref{prop:frontier-indifference}, $U_i-L_i\in\{0,1\}$ for all $i\in\mathcal{N}$. Define
\[
I\coloneqq\left\{i\in\mathcal{N}:U_i=L_i+1\right\}.
\]
For every $i\in\mathcal{N}$, let $S_i^R$ denote the set of the $L_i$ highest-priority type-$i$ agents, and let
\[
S^R\coloneqq\bigcup_{i\in\mathcal{N}}S_i^R
\]
denote the set of agents that the reserve-and-quota policy selects through the reserve slots. Thus, $S^R\subseteq A^{RQ}$ and $\abs{S^R}=\norm{L}$. 

For every $i\in I$, let $a_i^*$ denote the $(L_i+1)$-st highest-priority type-$i$ agent, and let 
\[
S^O\coloneqq\left\{a_i^*:i\in I\right\}
\]
denote the set of agents who are eligible for an open slot. Indeed, if the reserve-and-quota policy selects an additional type-$i$ agent, it would select $a_i^*$. Thus, the reserve-and-quota policy allocates the remaining $q-\norm{L}$ open slots to the $q-\norm{L}$ highest-priority agents in $S^O$.

Because $\xi(A)\in\mathcal{F}_r(x,q)$, \autoref{lem:frontier-bounds} implies $L\leq\xi(A)\leq U$. Define
\[
\mathcal{N}^1(A)\coloneqq\left\{i\in \mathcal{N}:\xi_i(A)=L_i+1\right\}.
\]
Note that $\mathcal{N}^1(A)\subseteq I$ and $\xi_i(A)=L_i$ for all $i\notin\mathcal{N}^1(A)$. Since $\norm{\xi(A)}=q$ and 
\[
\norm{\xi(A)}=\sum_{i\notin \mathcal{N}^1(A)}\xi_i(A)+\sum_{i\in \mathcal{N}^1(A)}\xi_i(A)=\norm{L}+\abs{\mathcal{N}^1(A)},
\]
we have $\abs{\mathcal{N}^1(A)}=q-\norm{L}$.  

Now define a new selection
\[
\widehat{A}\coloneqq S^R\cup\left\{a\in S^O: \tau(a)\in \mathcal{N}^1(A)\right\}.
\]
By construction, $\xi_i(\widehat{A})=L_i$ for all $i\notin \mathcal{N}^1(A)$ and $\xi_i(\widehat{A})=L_i+1$ for all $i\in \mathcal{N}^1(A)$. Thus, $\xi(A)=\xi(\widehat{A})$. Moreover, for each type $i$, $\widehat{A}$ contains the $\xi_i(A)$ highest-priority type-$i$ agents. Therefore, $\widehat{A}$ priority dominates $A$.

The sets $\widehat{A}$ and $A^{RQ}$ contain the same agents in $S^R$ and differ only in the allocation of the remaining $q-\norm{L}$ open slots. By construction, $A^{RQ}$ assigns those slots to the $q-\norm{L}$ highest-priority agents in $S^O$, whereas $\widehat{A}$ assigns them to the subset of agents in $S^O$ whose types are in $\mathcal{N}^1(A)$. Hence, $A^{RQ}$ priority dominates $\widehat{A}$. By transitivity, $A^{RQ}$ priority dominates $A$.
\end{proof}

\newpage
\singlespacing
\bibliographystyle{plainnat}
\nocite{}
\bibliography{matching}
\end{document}